\let\proof\relax
\let\endproof\relax
\newtheorem{theorem}{Theorem}
\newtheorem{lemma}{Lemma}
\newtheorem{definition}{Definition}
\newtheorem{example}{Example}
\DeclareMathOperator*{\reals}{\mathbb{R}}
\newcommand{\N}{\mathcal{N}}
\newcommand{\low}[1]{{#1}^\downarrow}
\newcommand{\up}[1]{{#1}^\uparrow}
\newcommand{\stori}{\text{StoRI}\xspace}
\newcommand{\storm}{\text{StoRM}\xspace}
\newcommand{\btraj}{\mathbf{b}}
\newcommand{\xtraj}{\mathbf{x}}
\newcommand{\ptraj}{\mathbf{b}_t}
\newcommand{\X}{$\mathcal{A}$\xspace}
\title{\LARGE \bf Stochastic Robustness Interval for Motion Planning with Signal Temporal Logic}
\author{Roland B. Ilyes, Qi Heng Ho, and Morteza Lahijanian
\thanks{Authors are with the department of Aerospace Engineering Sciences at the University of Colorado Boulder, CO, USA
        {\tt\small \{\textit{firstname}.\textit{lastname}\}@colorado.edu}}%
}
\begin{document}

      
\maketitle

\begin{abstract}
In this work, we present a novel robustness measure for continuous-time stochastic trajectories with respect to Signal Temporal Logic (STL) specifications. We show the soundness of the measure and develop a monitor for reasoning about partial trajectories. Using this monitor, we introduce an STL sampling-based motion planning algorithm for robots under uncertainty. Given a minimum robustness requirement, this algorithm finds satisfying motion plans; alternatively, the algorithm also optimizes for the measure.  We prove probabilistic completeness and asymptotic optimality of the motion planner with respect to the measure, and demonstrate the effectiveness of our approach on several case studies. 
\end{abstract}
\section{Introduction}
\label{sec:intro}               


In recent years, \emph{Temporal Logics} (TLs) \cite{Lahijanian:AR-CRAS:2017}
have been increasingly employed to formalize complex robotic tasks.  These logics allow precise description of properties over time by combining Boolean logic with temporal operators.  
A popular choice is \emph{Linear TL} \cite{BaierBook2008}, where time is treated as linear and discrete.  
However, robotic systems operate in continuous time, and their tasks often include properties in dense time. Signal TL (STL) \cite{Maler2004} is a variant of TL that  
provides the means for expressing such tasks with evaluations over continuous signals (trajectories).  
For motion planning, however, STL introduces both computational and algorithmic challenges precisely due to the same reason that makes it powerful, i.e., reasoning in continuous time.  
This challenge is exacerbated in real-world robotics, where uncertainty cannot be avoided.  Hence, 
motion planners must reason about the robustness of plans by accounting for uncertainty. This requires a proper notion of a robustness measure.  STL does in fact admit such a measure but only for deterministic trajectories \cite{donze2010robust}. No such a measure is known for stochastic systems.  
This paper takes on this challenge and aims to develop a stochastic robustness measure for STL with the purpose of using it for efficient motion planning with robustness guarantees.

Consider the following mission for the robot in Fig. \ref{fig:form4}: \textit{
Go to the charger in the next 10 minutes. But, if you traverse a puddle of water, stay away from the charger until you dry off on the carpet within 3 minutes}. 
Such a specification is easily captured in STL ($\varphi_3$ in \eqref{formeq:3}). 
Two sample trajectories of an uncertain robot are shown in Fig. \ref{fig:form4}, where the ellipses represent the 90\% confidence contours around the nominal trajectories. Here, both trajectories appear to satisfy the temporal aspects of the specification. However, their spatial robustness is different with regards to their uncertainty. Note that, even though the nominal component of Trajectory 2 gets to the charger without passing through the water puddle, its uncertainty ellipses overlap with the puddle and walls, making it less robust than Trajectory 1 with respect to the task. A capable motion planner must be able to reason about the robustness of these trajectories with respect to both temporal and spatial aspects of the STL task algorithmically.

Much of the literature for control synthesis for STL specifications has been focused on deterministic systems. Success has been found using optimization-based techniques \cite{raman2014,raman2015reactive,OPTSTL1,OPTSTL2}, control barrier function approaches \cite{CBFSTL}, and sampling-based methods \cite{vasile2017sampling,CDCPaper}. 
The sampling-based approaches are particularly suitable for robotics applications given their efficiency and scalability.
None of those methods, however, account for stochasticity in the robot's dynamics. 


\begin{figure}[t]
    \centering
    \includegraphics[width = .84\linewidth,trim={1.8cm 1.5 1.8cm 1.5},clip]{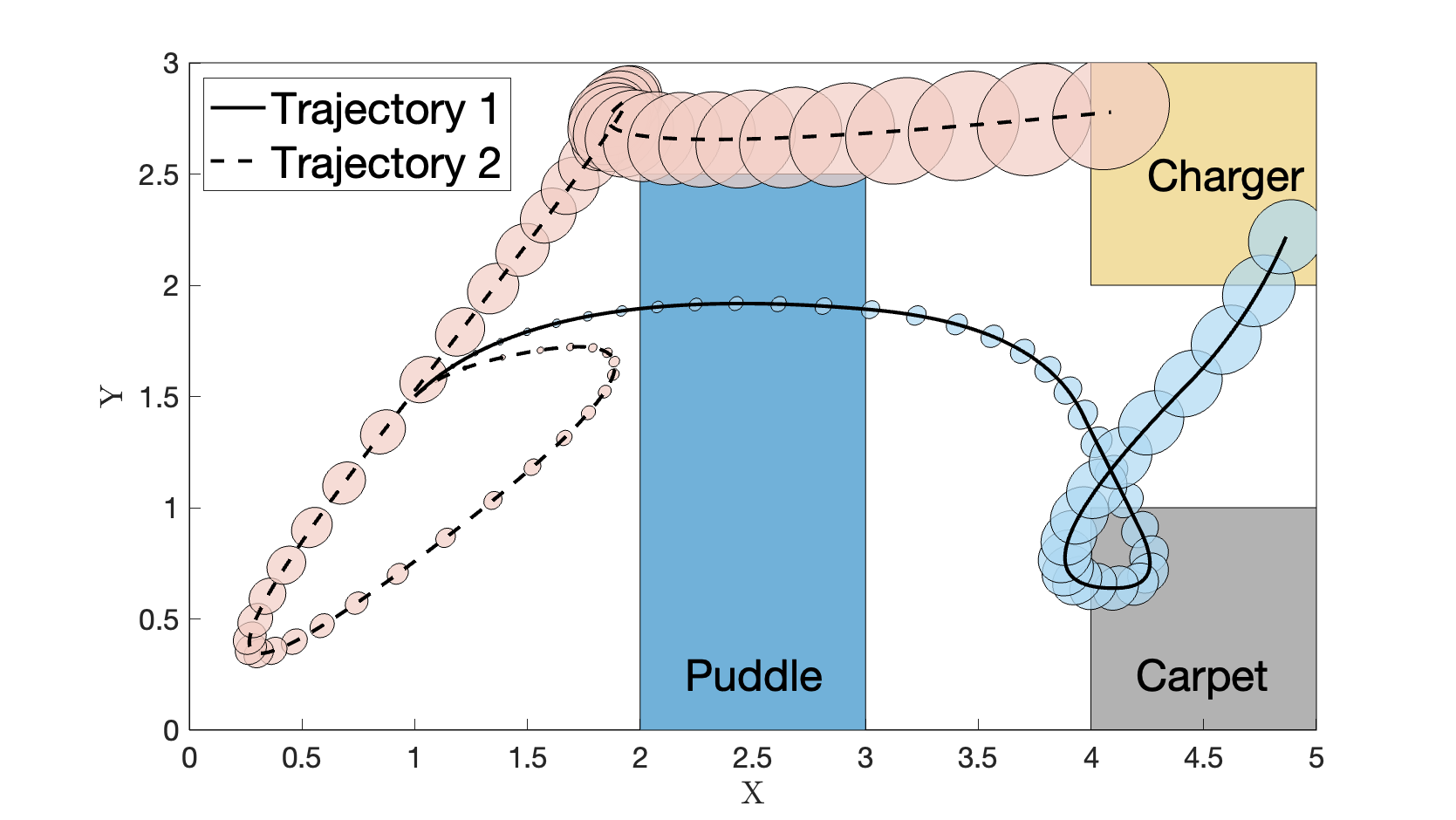}
    \vspace{-1mm}
    \caption{Two trajectories for $\varphi_3$ in \eqref{formeq:3}. Trajectory 1 has a \storm of 0.96 and Trajectory 2 has a \storm of 0.71.}
    \label{fig:form4}
    \vspace{-4mm}
\end{figure}

Reasoning about systems under uncertainty with continuous-time TL specifications is a rapidly developing topic. Recent works \cite{C2TL, farahani2018shrinking, PrSTLBelta, PrSTLSadigh,lee2021signal,tiger2020incremental} introduce new TL as extensions to STL that incorporate uncertainty in the logic itself.
Nevertheless, they do not define a robustness measure for STL. Other works \cite{bartocci2015system,farahani2018shrinking} use STL and reason about the distribution of the STL robustness measure over the realizations of stochastic trajectories. This measure, however, is defined with respect to deterministic trajectories and ignores the knowledge of the system's uncertainty.



In this work, we develop a novel measure of the robustness of continuous-time stochastic trajectories with respect to STL specifications. We refer to it simply as the \textit{Stochastic Robustness Measure} (\storm). \storm allows us to quantify how well a stochastic trajectory satisfies a specification. 
The measure is based on the recursive evaluation of the \textit{Stochastic Robustness Interval} (StoRI) according to the semantics of STL over continuous stochastic trajectories and the probability measure. We also propose a technique to compute the StoRI of partial trajectories, which allows us to reason about (monitor) the satisfaction robustness of the STL properties on an evolving stochastic trajectory.  
Next, we present a sampling-based algorithm that produces motion plans that can (i) satisfy a user-defined bound on \storm, or (ii) asympotically optimize for \storm. We prove the theoretical properties of the algorithm under \storm constraints, and demonstrate the effectiveness of our measure and planner on a variety of STL formulas in several case studies.

In summary, the contributions of this work are four-fold: (i) a novel measure (\storm) to quantify the robustness of stochastic trajectories against STL specifications, (ii) a new monitor to compute the robustness of partial trajectories, (iii) a probabilistically-complete planning algorithm that satisfies STL specifications with \storm constraints for systems under uncertainty and asymptotically optimizes for \storm, and (iv) a series of case studies and benchmarks that reveal properties of \storm and performance of the planner.

\section{System Setup}
\label{sec:setup}


Consider a robotic system whose dynamics are described by
a linear stochastic differential equation (SDE):
\begin{equation}
    \label{eq:SDE}
    dx(t) = (A x(t) + B u(t) ) dt + G dw(t),
\end{equation}
where $x \in X \subset \reals^n$ is the state, $u \in U \subset \reals^m$ is the control input,
$w(\cdot)$ is an $r$-dimensional Wiener process (Brownian motion) with diffusion matrix $Q \in \mathbb{R}^{r\times r}$ representing noise, $A \in \reals^{n\times n}$,  $B \in \reals^{n\times m}$, and $G \in \reals^{n\times r}$. 
The initial state of the robot is $x(0) = x_0$. 
The solution to the SDE in \eqref{eq:SDE} is a continuous-time Gaussian process \cite{LinSDE}, 
i.e., 
$$x(t) \sim b_t = \N(\hat{x}(t) , P(t)),$$ 
where $b_t$, called the \textit{belief} of $x(t)$, is a normal distribution with mean $\hat{x}(t) \in X$ and covariance $P(t) \in \reals^{n\times n}$.  The evolution of $b_t$ is governed by:
\begin{align}
    \dot{\hat{x}}(t) &= A\hat{x}(t) + Bu(t),
    \label{eq:mean-dynamics}\\
    \dot{P}(t) &= AP(t) + P(t)A^T + GQG^T 
    \label{eq:covariance-dynamics}
\end{align}
with initial conditions $\hat{x}(0) = x_0$ and $P(0) = 0$.
Note that if, instead of a deterministic initial state, the robot has initial uncertainty described by a Gaussian distribution, only the initial conditions $\hat{x}(0)$ and $P(0)$ change.

Let $T \in \reals_{\geq 0}$ be a time duration.
Then, given
a controller $\mathbf{u}: [0,T] \rightarrow U$,  a \textit{belief (stochastic) trajectory} $\btraj$ over time window $[0,T]$ for the robotic system \eqref{eq:SDE} can be computed (predicted) using \eqref{eq:mean-dynamics} and \eqref{eq:covariance-dynamics}. 
An execution of this controller on the robotic system, called
a \textit{realization} or \textit{sample} of $\btraj$, is a \textit{state trajectory} $\xtraj$ over time duration $[0,T]$.

We are interested in properties of System  \eqref{eq:SDE} with respect to a set of \emph{linear predicates} defined in state space $X$. Let $\mathcal{H} = \{h_1,h_2,...,h_l\}$ be a given set of functions where $h_i: \mathbb{R}^n \rightarrow \mathbb{R}$ is a linear function for every $1 \leq i \leq l$. Then, the set of predicates $M = \{\mu_1,...,\mu_l\}$ is defined on $\mathcal{H}$ such that $\forall i \in \{1,...,l\}$, the Boolean values of $\mu_i:X\rightarrow \{\top,\bot\}$ is determined by the sign of function $h_i$ as:  
\begin{equation}
    \mu_i(x) = 
    \begin{cases}
        \top \text{ if } h_i(x) \geq 0 \\
        \bot \text{ if } h_i(x) < 0.
    \end{cases}
    \label{eq:preddef}
\end{equation}

To express desired properties of the robot with respect to the set of predicates $M$, we use \textit{signal temporal logic} (STL) \cite{Maler2004}. STL is a logic that allows specification of
real-time temporal properties,
and is therefore well-suited for continuous-time systems such as the one in \eqref{eq:SDE}.

\begin{definition}[STL Syntax]
    \label{def:STLSyntax}
    The STL Syntax is recursively defined by:
    \begin{equation*}
        \phi := \top \mid \mu \mid \lnot \phi \mid \phi \land \phi \mid \phi \, \mathcal{U}_I \phi
    \end{equation*}
    where $\mu \in M$, and $I = \langle a,b \rangle$ is a time interval with $\langle \in \{(,[\}$, $\rangle \in \{),]\}$, $a,b \in \mathbb{R}_{\geq 0}$ and $a < b < \infty$. Notations $\top$, $\lnot$, and $\land$ are the Boolean ``true,"  ``negation,'' and ``conjunction,'' respectively, and $U$ denotes the temporal ``until'' operator. 
\end{definition}

\noindent
The temporal operator \textit{eventually} ($\Diamond$) is defined as $\Diamond_{I}\phi \equiv \top \mathcal{U}_{I}\phi$ and the operator \textit{globally} is defined as $\square_{I}\phi \equiv \lnot \Diamond_{I}\lnot \phi$. 

\begin{definition}[STL Semantics]
\label{def:STLSemant}
    The semantics of STL is defined over a state trajectory realization $\xtraj$ 
    at time $t$ as:
        \begin{alignat*}{2}
            (\xtraj,t) &\models \top &&\Longleftrightarrow \top \\
            (\xtraj,t) &\models \mu &&\Longleftrightarrow h(\xtraj(t)) \geq 0 \\
            (\xtraj,t) &\models \lnot \phi &&\Longleftrightarrow (\xtraj,t) \not\models \phi\\
            (\xtraj,t) &\models \phi_1 \land \phi_2 &&\Longleftrightarrow (\xtraj,t) \models \phi_1 \land (\xtraj,t) \models \phi_2\\
            (\xtraj,t) &\models \phi_1 \mathcal{U}_{\langle a,b \rangle} \phi_2 &&\Longleftrightarrow \exists t' \in \langle t+a,t+b \rangle \, s.t. \, (\xtraj,t') \models \phi_2 \\ & && \quad \ \quad \land \forall t'' \in [t,t'], \; (\xtraj,t'') \models \phi_1
        \end{alignat*}
    where $\models$ denotes satisfaction. A state trajectory $\xtraj$ satisfies an STL formula $\phi$ if $(\xtraj,0) \models \phi$. 
\end{definition}

In addition to the Boolean semantics, STL admits \textit{quantitative} semantics \cite{Lahijanian:AR-CRAS:2017}. This is traditionally defined as a robustness metric \cite{donze2010robust} based on Euclidian distance. 
This metric is well-defined for deterministic systems.
For a stochastic process such as System \eqref{eq:SDE}, however, evaluation of the satisfaction of a belief trajectory is not straightforward.  

In this work, we aim to develop an appropriate measure of robustness to evaluate the satisfaction of STL formulas by a belief trajectory. Specifically, we are interested in using this measure for robust motion planning for System \eqref{eq:SDE}. Therefore, the robustness measure must be appropriate for planning, namely sampling-based motion planning algorithms. This entails that the measure must be able to provide useful information for complete as well as partial trajectories. In the next section, we introduce such a measure. Then, we present a motion planning algorithm that asymptotically optimizes for this measure.

\section{Stochastic Robustness Measure}
\label{sec:stori}

A popular method of measuring robustness of belief trajectories in motion planning is based on the notion of chance constraints \cite{blackmore2006probabilistic,luders2010chance,ho2022gaussian}. Chance constraints require that the probability of constraint violation (e.g., avoiding obstacles) not exceed some prescribed value. Motion planners typically enforce this chance constraint at each time step. 

Such an approach is difficult to extend to STL formulas. This is primarily due to temporal operators and their time intervals,
e.g., the probability of violating $\Diamond_I \phi$ only needs to be below a prescribed value at one time step in $I$, but it is not clear at which time step to enforce this. 
Some approaches find success by generating constraints over time windows, and allocating risk of violating the constraints among time steps~\cite{C2TL,farahani2018shrinking}. However, those methods are limited to reasoning about discrete-time trajectories. This is in conflict with the fundamental idea of STL, which is expressing properties over real-valued continuous-time intervals.

Furthermore, those approaches only provide a \textit{qualitative} (boolean) judgement of a trajectory's satisfaction with respect to a chance constraint. In contrast, we seek to define a measure that provides a \textit{quantitative} judgement of a trajectory's satisfaction. This is analogous to robustness for deterministic STL that provides a quantitative measure beyond the qualitative Boolean semantics. Such a measure can then be extended to reason about partial trajectories, and hence, is advantageous for iterative methods for planning such as sampling-based algorithms. To illustrate what such a measure might convey, consider the following example.
\begin{example}
    \label{ex:intuition}
    \emph{
    Consider the formulas $\phi_1 = \square_I \mu$ and $\phi_2 = \Diamond_I \mu$. In the case of the $\square$ operator, which states that a property must hold for all time $ t \in I$, a quantitative measure of robustness
    could be characterized by the point with the lowest probability of satisfying $\mu$ in $I$, i.e., $\min_{t\in I} P(h(\xtraj(t)) \geq 0)$, where $h$ is the linear function that $\mu$ is defined on. If the probability of violation at that point is below a certain threshold, then it is also below that threshold at every other point. Similarly, for the $\Diamond$ operator, which states that a property must hold for \textit{a} time point $ t \in I$, we could look at the point with the \textit{highest} probability of satisfying $\mu$, i.e., $\max_{t\in\langle a,b \rangle} P(h(\xtraj(t)) \geq 0)$. If the probability of violation at that point is above a certain threshold, then formula $\phi_2$ is satisfied. A quantitative measure of stochastic robustness must incorporate this conflicting treatment of temporal operators.
    }
\end{example}
This intuition guides the development of the Stochastic Robustness Interval (\stori) as defined below.

\begin{definition}[Stochastic Robustness Interval]
    \label{def:stori}
    The Stochastic Robustness Interval (\stori) of a belief trajectory $\btraj$ over time window $[0,T]$ with respect to an STL formula $\phi$ is a functional $f(\phi, \btraj)$:
    \begin{equation*}
        f(\phi, \btraj) = [\low{f}(\phi, \btraj), \up{f}(\phi,  \btraj)]
    \end{equation*}
    such that $0 \leq \low{f}(\phi,\btraj) \leq \up{f} (\phi,\btraj) \leq 1$.  
    For a $t \in [0,T]$,
    let $\btraj^t$ be the time-shifted suffix of trajectory $\btraj$ such that 
    for all $t' \in [0,T-t]$, $\btraj^t(t') = \btraj(t+t')$. 
    Then, lower bound $\low{f}(\phi,\btraj)$ and upper bound $\up{f}(\phi,\btraj)$ are recursively defined by:
    \begin{align*}
        &\low{f}(\top,\btraj) = \up{f}(\top, \btraj) = 1, \\
        &\low{f}(\mu, \btraj) = \up{f}(\mu, \btraj) = P(h(x(0)) \geq 0 ), \\
        &\low{f}(\lnot \phi, \btraj) = 1 - \up{f}(\phi,\btraj), \\
        &\up{f}(\lnot \phi, \btraj) = 1 - \low{f}(\phi,\btraj), \\
        \end{align*} 
        \begin{align*}
        &\low{f}(\phi_1 \land \phi_2, \btraj) = \max\big\{\low{f}(\phi_1,\btraj) + \low{f}(\phi_2,\btraj) -1,0\big\}, \\
        &\up{f}(\phi_1 \land \phi_2, \btraj) = \min\big\{\up{f}(\phi_1,\btraj), \up{f}(\phi_2,\btraj)\}, \\
        &\low{f}(\phi_1 U_{\langle a,b \rangle} \phi_2, \btraj) = \\
        & \quad \max_{t\in\langle a,b \rangle} \Big\{ \max \big\{\low{f}(\phi_2,\btraj^t) + \min_{t'\in[0,t]}\low{f}(\phi_1,\btraj^{t'}) - 1,0\big\} \Big\}, \\
        &\up{f}(\phi_1 U_{\langle a,b \rangle} \phi_2, \btraj) = \\
        & \quad \max_{t\in\langle a,b \rangle} \Big\{ \min \big\{\up{f}(\phi_2,\btraj^t), \min_{t'\in[0,t]}\up{f}(\phi_1,\btraj^{t'})\big\}\Big\}.
    \end{align*}
\end{definition}

We define the Stochastic Robustness Measure (\storm) to be the lower bound of \stori. 

\begin{definition}[Stochastic Robustness Measure]
    \label{def:StochRo}
    The \emph{Stochastic Robustness Measure} (\storm) of a belief trajectory $\btraj$ with respect to STL formula $\phi$ is the lower bound of the \stori, i.e, $\low{f}(\phi, \btraj)$.
\end{definition}

The \stori is an interval that aims to quantify how robustly a belief trajectory $\btraj$ satisfies an STL formula $\phi$. The \storm of a belief trajectory is the lower bound of its \stori. 
A trajectory always satisfies a Boolean $\top$, so the \stori for $\phi = \top$ is $[1,1]$ for every trajectory. If $\phi = \mu$ is a linear predicate, both bounds of the \stori are the probability that the state at time zero $x(0)$
satisfies the linear predicate. Works \cite{blackmore2006probabilistic} \cite{pairet2021} outline an efficient way to calculate this probability for Gaussian distributions. However, we stress that the \stori is defined for general belief distributions, and also applies to non-Gaussian beliefs. 

The \stori of the negation of a formula $\lnot \phi$ derives from the Unit Measure Axiom of Probability \cite{ProbBook}. Note that the lower bound of the \stori of $\lnot\phi$ depends on the upper bound of the \stori of $\phi$, and vice-versa.

\begin{figure*}
\centering
  \includegraphics[width=0.32\linewidth,trim={2cm 0 1.3cm 0},clip]{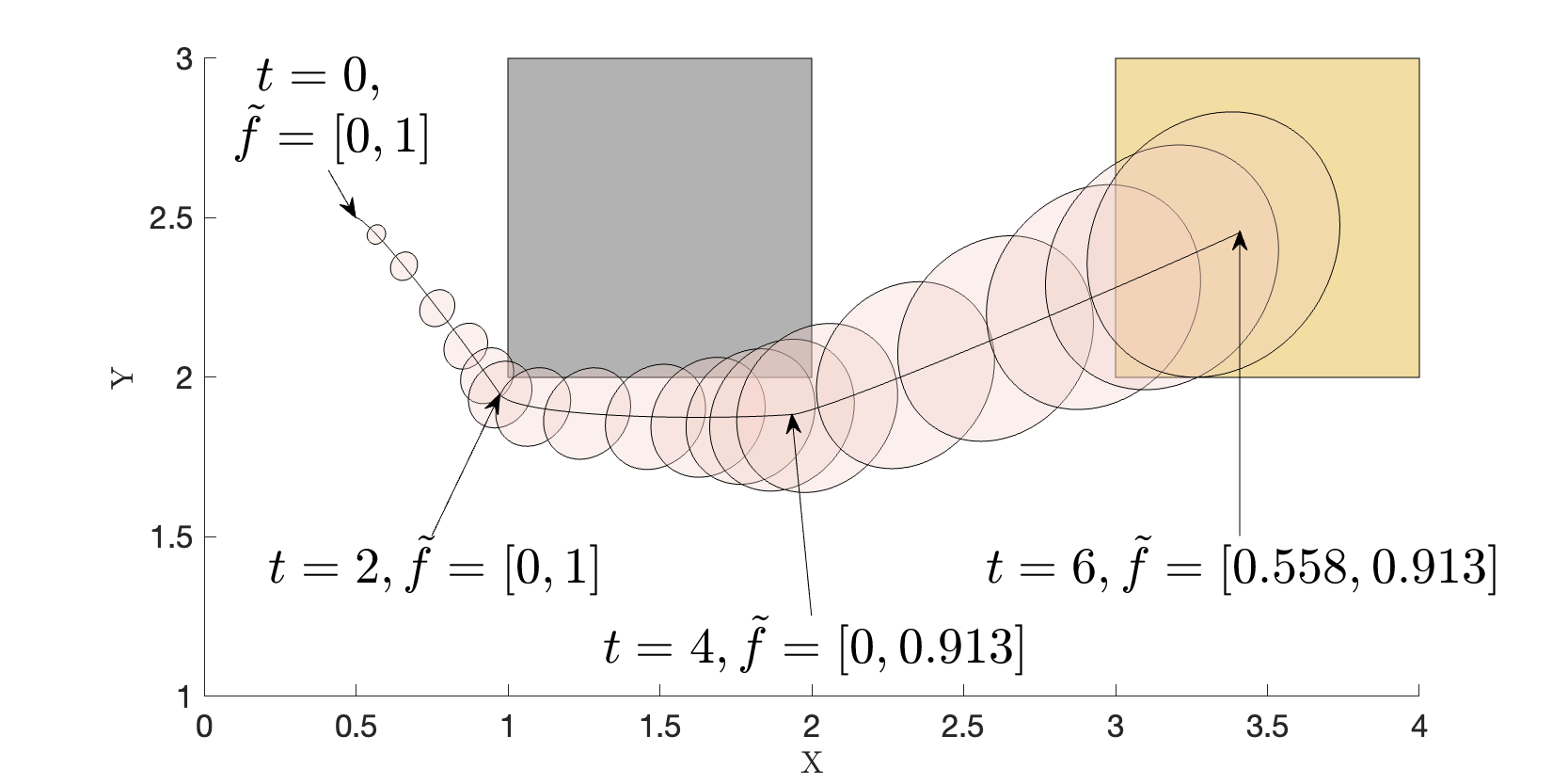}
  \hfill
  \includegraphics[width=0.32\linewidth,trim={2cm 0 1.3cm 0},clip]{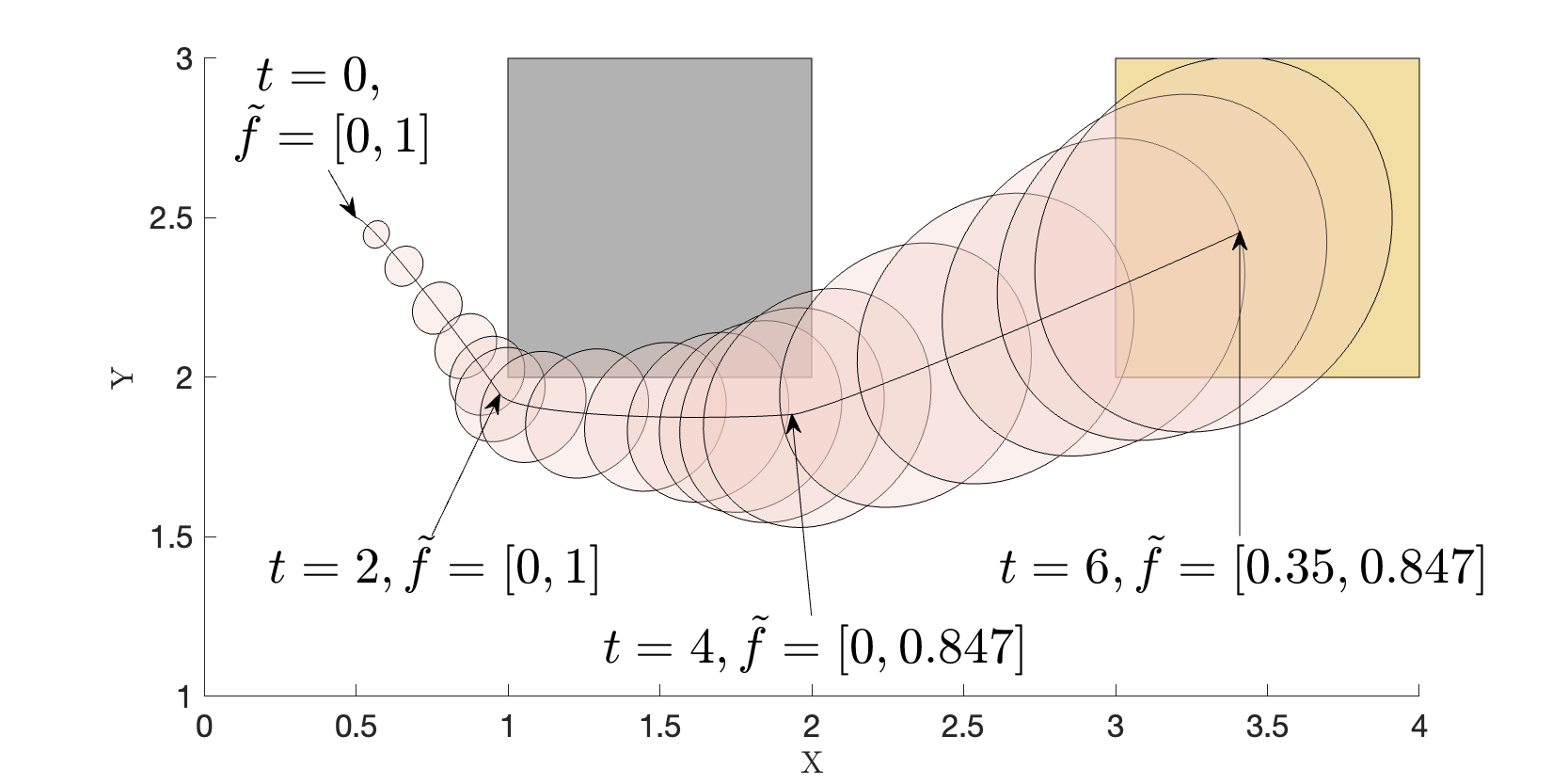}
  \hfill
  \includegraphics[width=0.32\linewidth,trim={2cm 0 1.3cm 0},clip]{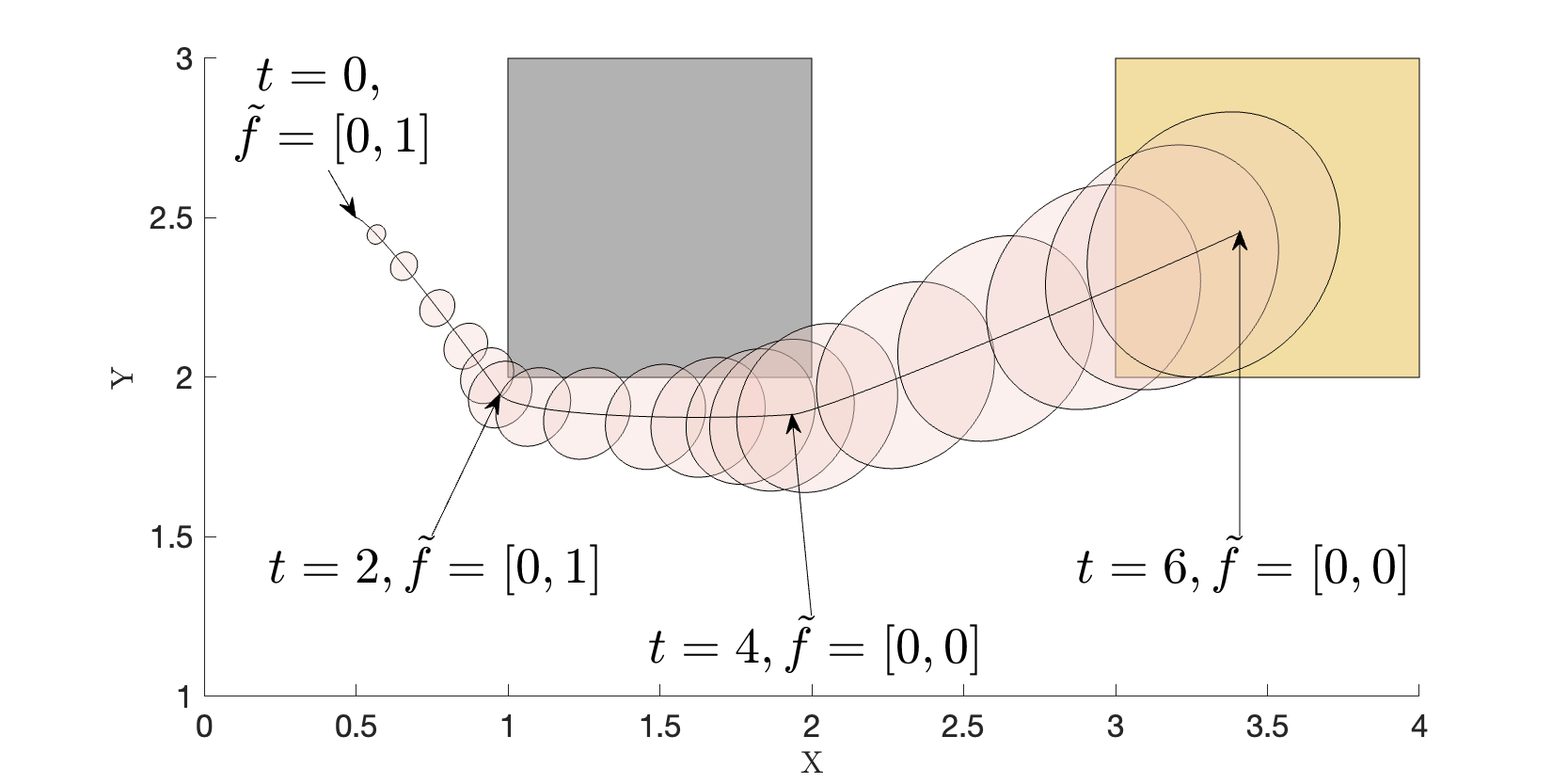}
\caption{Sample trajectories that have a \stori of (left-to-right) $[0.72, 0.91]$, $[0.52, 0.847]$, $[0, 0]$ with respect to $\varphi_1$ in \eqref{eq:form1}}
\label{fig:examples}
\end{figure*}

The \stori for a conjunction of two formulas $\phi_1 \land \phi_2$ is inspired by the lower and upper bounds on the probability of a conjunction of two events. These are the Boole-Fr\'echet inequalities for logical conjunction \cite{frechet1935generalisation}, which make no assumptions on the independence on the events, and hence are general. Note that the measure seeks to quantify both upper and lower bounds because the upper bound is conservative when the specification seeks to enforce `avoidance' of the conjunction, but the lower bound is conservative when it seeks to enforce the `occupancy' of the conjunction.

The \stori of $\phi_1 U_{\langle a,b \rangle} \phi_2$ is less straightforward. Intuitively, it aims to report the 
\storm
at the time $t \in \langle a,b \rangle$ that best balances the point-wise probabilities that $\phi_1$ hold for all $t' \in [0,t]$ and the probability that $\phi_2$ hold at $t$. 

Note that by following Definition \ref{def:stori}, the obtained \stori for eventually and globally are: 
\begin{align}
    f(\Diamond_{\langle a,b\rangle} \phi, \btraj) &= \bigg[\max_{t \in \langle a,b \rangle} \low{f}(\phi,\btraj^t), \; \max_{t \in \langle a,b\rangle} \up{f}(\phi,\btraj^t)\bigg] 
\end{align}
\begin{align}
    f(\square_{\langle a,b\rangle} \phi, \btraj) &= \bigg[\min_{t \in \langle a, b\rangle} \low{f}(\phi,\btraj^t), \; \min_{t \in \langle a,b\rangle} \up{f}(\phi,\btraj^t)\bigg]  
\end{align}

\noindent
These intervals are fully inline with the intuition of the measure as discussed in Example~\ref{ex:intuition}. 

\begin{example}
    \label{ex:stori}
    \emph{
    Consider the environment in Fig. \ref{fig:examples} and formula
    \begin{equation}
    \label{eq:form1}
        \varphi_1 = \neg (x\geq 1 \wedge y \geq 2 \wedge x \leq 2) U_I (x \geq 3 \wedge y \geq 2).
    \end{equation}
    Fig. \ref{fig:examples} shows three belief trajectories, where the ellipses represent the 90\% confidence bounds of their uncertainty. 
    The first two trajectories (left and middle) have the same expectation but are subject to different amounts of process noise. 
    When evaluating against $\varphi_1$ with time interval $I = [0,6]$,
    we see both bounds of the \stori decrease
    from $[0.72, 0.91]$ for the first trajectory (left) to $[0.52, 0.85]$ for the
    the second trajectory that has more uncertainty (middle). This trajectory is less likely to clear the obstacle, and also less likely to arrive in the goal region. The third plot (right) shows the first trajectory but evaluated against $\varphi_1$ with smaller interval $I = [0, 4]$. The robot does not arrive to goal until after 5 seconds. Hence, the robot fails to satisfy the specification and its \stori is $[0, 0]$.
    }
\end{example}


An important property of the \stori is that a trajectory which has a \storm of 1 satisfies STL formula $\phi$ with probability 1, as stated in the following theorem, which illustrates soundness of \storm. 


\begin{theorem}
Given belief trajectory $\btraj$ and STL formula $\phi$, if the Stochastic Robustness Measure $\low{f}(\phi,\btraj) = 1$, then every realization $\xtraj$ of $\btraj$ satisfies the specification, i.e., $\xtraj \models \phi$, except realizations on the set of measure zero.
\end{theorem}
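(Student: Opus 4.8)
The plan is to prove a stronger, \emph{dual} statement by structural induction on $\phi$: for every STL formula $\phi$ and belief trajectory $\btraj$,
(a) if $\low{f}(\phi,\btraj)=1$ then $P(\xtraj\models\phi)=1$, and (b) if $\up{f}(\phi,\btraj)=0$ then $P(\xtraj\models\phi)=0$, where $P$ is taken over realizations $\xtraj$ of $\btraj$. Statement (a) is exactly the theorem, but (b) is needed to push the induction through negation, since $\low{f}(\lnot\phi,\btraj)=1-\up{f}(\phi,\btraj)$. Throughout I would use that a realization $\xtraj$ of $\btraj$ induces a realization $\xtraj^t$ of every suffix $\btraj^t$ with $(\xtraj^t,0)\models\psi \Leftrightarrow (\xtraj,t)\models\psi$, so that the inductive hypothesis applied to $\btraj^t$ transfers directly to time-$t$ satisfaction of $\xtraj$.

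First I would dispatch the base and Boolean cases, which involve only finite combinations and raise no measure-theoretic subtleties. For $\top$ both statements are immediate. For a predicate $\mu$, $\low{f}(\mu,\btraj)=\up{f}(\mu,\btraj)=P(h(x(0))\geq 0)$ is exactly the probability that $\xtraj$ satisfies $\mu$ at time $0$, so (a) and (b) are direct. For $\lnot\phi$, statements (a) and (b) swap roles with those for $\phi$ via the negation rules and the identity $P(\xtraj\models\lnot\phi)=1-P(\xtraj\models\phi)$. For $\phi_1\land\phi_2$, attaining $\low{f}=\max\{\low{f}(\phi_1,\btraj)+\low{f}(\phi_2,\btraj)-1,0\}=1$ forces both $\low{f}(\phi_i,\btraj)=1$ (each summand is at most $1$), so by the inductive hypothesis both conjuncts hold a.s., and the intersection of two measure-one events is measure-one; dually $\up{f}=\min\{\up{f}(\phi_1,\btraj),\up{f}(\phi_2,\btraj)\}=0$ forces some $\up{f}(\phi_i,\btraj)=0$, whence $P(\xtraj\models\phi_1\land\phi_2)\leq P(\xtraj\models\phi_i)=0$.

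The temporal case is the heart of the argument. Suppose $\low{f}(\phi_1 U_{\langle a,b\rangle}\phi_2,\btraj)=1$. Since every term inside the outer maximum is at most $1$, reaching the value $1$ forces a time $t^\ast\in\langle a,b\rangle$ with $\low{f}(\phi_2,\btraj^{t^\ast})=1$ and $\min_{t'\in[0,t^\ast]}\low{f}(\phi_1,\btraj^{t'})=1$, the latter meaning $\low{f}(\phi_1,\btraj^{t'})=1$ for \emph{every} $t'\in[0,t^\ast]$. The inductive hypothesis (a) then gives $P((\xtraj,t^\ast)\models\phi_2)=1$ and, for each fixed $t'$, $P((\xtraj,t')\models\phi_1)=1$. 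By the STL semantics the event ``$(\xtraj,t^\ast)\models\phi_2$ and $(\xtraj,t')\models\phi_1$ for all $t'\in[0,t^\ast]$'' is contained in $\{\xtraj\models\phi_1 U_{\langle a,b\rangle}\phi_2\}$, so it suffices to show this event has probability one; the $\Diamond$ and $\square$ specializations follow as $\phi_1=\top$ and $\phi_2=\top$, and statement (b) is symmetric, reducing to the same intersection in its outer quantifier.

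The main obstacle is precisely the universal temporal quantifier: we must intersect the \emph{continuum} family of measure-one events $\{(\xtraj,t')\models\phi_1\}_{t'\in[0,t^\ast]}$, and an uncountable intersection of probability-one events need not have probability one. It is worth stressing that no pointwise inequality $\low{f}(\phi,\btraj)\leq P(\xtraj\models\phi)$ holds in general, since for $\square$ the lower bound reflects only the single worst instant and ignores temporal correlation; the proof must therefore exploit that the value is exactly $1$. I would resolve the intersection using the almost-sure continuity of the sample paths $\xtraj$ of \eqref{eq:SDE} together with the regularity of the time-indexed satisfaction set: fix a countable dense $D\subset[0,t^\ast]$, so that $\bigcap_{t'\in D}\{(\xtraj,t')\models\phi_1\}$ is a countable intersection of measure-one events and hence has probability one, and then extend satisfaction from $D$ to all of $[0,t^\ast]$ by continuity. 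The extension is where care is required: because predicates use the closed condition $h(\cdot)\geq 0$, a predicate's time-satisfaction set is closed along a continuous path and the extension is exact, whereas negated predicates introduce the boundary $h(\xtraj(t'))=0$; here I would argue the exceptional pathwise events are null, noting that for Gaussian beliefs a value of $1$ forces degeneracy (zero variance in the relevant direction), so the corresponding conditions hold deterministically and the boundary contributes nothing. A secondary technical point, handled via continuity of $t\mapsto\low{f}(\cdot,\btraj^t)$ in the shift and a limiting argument on the closure of $\langle a,b\rangle$ when the interval is open, is ensuring the outer supremum and inner infimum over time are attained so that a genuine witness $t^\ast$ exists.
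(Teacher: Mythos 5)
Your proposal is correct, and its skeleton---structural induction on $\phi$ proving the dual pair ``$\low{f}(\phi,\btraj)=1$ implies almost-sure satisfaction'' and ``$\up{f}(\phi,\btraj)=0$ implies almost-sure violation''---is exactly the skeleton of the paper's proof, which announces the same two implications and walks the same cases. Where you genuinely diverge is in the until case, in two ways that make your argument stronger. First, you extract the witness $t^\ast$ directly from the \emph{lower}-bound recursion, obtaining $\low{f}(\phi_2,\btraj^{t^\ast})=1$ and $\low{f}(\phi_1,\btraj^{t'})=1$ for every $t'\in[0,t^\ast]$; the paper instead passes from $\low{f}(\phi_1 U_I \phi_2,\btraj)=1$ to $\up{f}(\phi_1 U_I \phi_2,\btraj)=1$ and then uses implications of the form $\up{f}(\psi,\btraj^t)=1\Rightarrow(\xtraj,t)\models\psi$, which its own induction hypothesis does not furnish (an upper bound of $1$ is strictly weaker than a lower bound of $1$: e.g.\ $\up{f}(\lnot\phi,\btraj)=1$ only says $\low{f}(\phi,\btraj)=0$, which does not imply almost-sure violation of $\phi$). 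Second, you identify and repair the measure-theoretic step the paper performs silently: going from ``for each fixed $t'$, $(\xtraj,t')\models\phi_1$ almost surely'' to ``almost surely, $(\xtraj,t')\models\phi_1$ for all $t'\in[0,t^\ast]$'' is an uncountable intersection of measure-one events, and your countable-dense-subset-plus-path-continuity argument, together with the observation that probability exactly $1$ for a linear predicate under a Gaussian belief forces degeneracy (zero variance along $h$), is precisely what makes the ``except on a set of measure zero'' clause rigorous for the linear-Gaussian system \eqref{eq:SDE}. Your side remark that no pointwise inequality $\low{f}(\phi,\btraj)\le P(\xtraj\models\phi)$ holds in general is also correct and important (it is visible empirically in Table~\ref{tab:corr}, where a \storm of $0.568$ coexists with a simulated satisfaction rate of $0.367$), since it explains why the proof must exploit the exact value $1$ rather than a monotone bound. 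What the paper's route buys is brevity, treating the claim as symbol-pushing over the recursion; what yours buys is soundness of the temporal case. The two soft spots you flag---regularity in time of the satisfaction set for general nested $\phi_1$, and non-attained suprema when $\langle a,b\rangle$ is open---remain sketched rather than proved in your proposal, but they are gaps in the paper's proof as well (its definition simply writes $\max$ and $\min$), and you are the only one to acknowledge them.
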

\begin{proof}[Proof]
Consider any realization $\xtraj$ of $\btraj$. If $\low{f}(\phi,\btraj) = 1 \Longrightarrow (\xtraj,0) \models \phi$ and $\up{f}(\phi,\btraj) = 0 \Longrightarrow (\xtraj,0) \not\models \phi$, the \stori follows STL semantics as defined in Definition \ref{def:STLSemant}, which defines the satisfaction of a deterministic trajectory (realizations). This can be shown recursively as follows.

For boolean $\top$, the \stori of $\btraj$ trivially follows STL semantics:
\begin{align*}
    \low{f}(\top,\btraj) = 1 &\Longrightarrow (\xtraj,0) \models \top, \\
\end{align*}
For predicate $\mu$, the \stori of $\btraj$ follows STL semantics:
\begin{align*}
    \low{f}(\mu,\btraj) = 1 &\Longleftrightarrow P(h(x(0)) \geq 0 ) = 1\\
    &\Longleftrightarrow h(x(0)) \geq 0\\ 
    &\Longrightarrow (\xtraj,0) \models \mu,
\end{align*}
For boolean negation $\lnot \phi$, the \stori for $\btraj$ follows STL semantics:
\begin{align*}
    \low{f}(\lnot \phi,\btraj) = 1 &\Longleftrightarrow 1- \up{f}(\phi,\btraj) = 1 \\
    &\Longleftrightarrow \up{f}(\phi,\btraj) = 0 \\
    &\Longrightarrow (\xtraj,0) \not \models \phi \\
    &\Longrightarrow (\xtraj,0) \models \lnot \phi,
\end{align*}
For boolean conjunction $\phi_1 \land \phi_2$, the \stori follows STL semantics:
\begin{align*}
    \low{f}(\phi_1 \land \phi_2,\btraj) = 1 &\Longrightarrow \up{f}(\phi_1 \land \phi_2,\btraj) = 1\\
    &\Longleftrightarrow \min \big\{\low{f}(\phi_1,\btraj),\low{f}(\phi_2,\btraj)\big\} = 1\\
    &\Longleftrightarrow \low{f}(\phi_1,\btraj) = 1 \land \low{f}(\phi_2,\btraj) = 1 \\
    &\Longrightarrow (\xtraj,0) \models \phi_1 \land (\xtraj,0) \models \phi_2 \\
    &\Longrightarrow (\xtraj,0) \models (\phi_1 \land \phi_2),
\end{align*}
and finally, for temporal Until $\phi_1 U_I \phi_2$, the \stori for $\btraj$ reduces to STL semantics:
\begin{align*}
    \low{f}(\phi_1 U_I \phi_2,\btraj) = 1 &\Longrightarrow \up{f}(\phi_1 U_I \phi_2,\btraj) = 1 \\
    &\Longleftrightarrow \max_{t \in I}\Big\{\min\{\up{f}(\phi_2,\btraj^t), \\
    & \hspace{1.5cm} \min_{t' \in [0,t]}\up{f}(\phi_1,\btraj^{t'})\}\Big\} = 1 \\
    &\Longrightarrow \exists t \in I \,s.t. \min\big\{\up{f}(\phi_2,\btraj^t), \\
    & \hspace{1.5cm} \min_{t' \in [0,t]}\up{f}(\phi_1,\btraj^{t'})\big\} = 1 \\
    &\Longleftrightarrow \exists t \in I \,s.t. \Big(\up{f}(\phi_2,\btraj^t) = 1 \\
    & \hspace{1.5cm} \land \min_{t' \in [0,t]}\up{f}(\phi_1,\btraj^{t'}) = 1 \Big)\\
    &\Longrightarrow \exists t \in I \,s.t. (\xtraj,t) \models \phi_2 \\
    & \hspace{1.5cm} \land \forall t' \in [0,t]\up{f}(\phi_1,\btraj^{t'}) = 1 \\
    &\Longrightarrow \exists t \in I \,s.t. (\xtraj,t) \models \phi_2 \\
    & \hspace{1.5cm} \land \forall t' \in [0,t](\xtraj,t') \models \phi_1 \\
    &\Longrightarrow (\xtraj,0) \models (\phi_1 U_I \phi_2)
\end{align*}
Therefore, a belief trajectory $\btraj$ having a \storm of 1 with respect to STL formula $\phi$ implies that every non-zero measure realization $\xtraj$ of $\btraj$ satisfies the STL specification. 
\end{proof}

\subsection{\stori Monitor}


\stori is defined for a given belief trajectory. It does not account for \emph{what could happen if we extend the trajectory.}
However, in some cases (such as planning), we are interested in extending trajectories to achieve a higher \storm. Hence, we need a monitor for \stori, which assumes the given trajectory is to be extended. The monitor must account for all the possible suffixes of the trajectory and how they might change the \storm. In this section, we present a monitor for the \stori with respect to a partial belief trajectory, that acts to bound the achievable \stori of any extensions of it.

\begin{definition}[StoRI Monitor]
    \label{def:Monitor}
    The Stochastic Robustness Interval (\stori) Monitor of a partial belief trajectory $\ptraj$ over time window $[0,t]$ with respect to an STL formula $\phi$ is a functional $\Tilde{f}(\phi, \ptraj)$:
    \begin{equation*}
        \Tilde{f}(\phi, \ptraj) = [\low{\Tilde{f}}(\phi, \ptraj), \up{\Tilde{f}}(\phi, \ptraj)]
    \end{equation*}
    such that $0 \leq \low{\Tilde{f}}(\phi,\ptraj) \leq \up{\Tilde{f}} (\phi,\ptraj) \leq 1$,
    where
    \begin{align*}
        &\Tilde{f}(\top,\ptraj) = f(\top,\ptraj) \\
        &\Tilde{f}(\mu,\ptraj) = f(\mu,\ptraj) \\
        &\low{\Tilde{f}}(\lnot \phi, \ptraj) = 1 - \up{\Tilde{f}}(\phi,\ptraj), \\
        &\up{\Tilde{f}}(\lnot \phi, \ptraj) = 1 - \low{\Tilde{f}}(\phi,\ptraj), \\
        &\low{\Tilde{f}}(\phi_1 \land \phi_2, \ptraj, t) = \max\big\{\low{\Tilde{f}}(\phi_1,\ptraj) + \low{\Tilde{f}}(\phi_2,\ptraj) -1,0\big\}, \\
        &\up{\Tilde{f}}(\phi_1 \land \phi_2, \ptraj, t) = \min\big\{\up{\Tilde{f}}(\phi_1,\ptraj), \up{\Tilde{f}}(\phi_2,\ptraj)\}, \\
        &\low{\Tilde{f}}(\phi_1 U_{\langle a,b \rangle} \phi_2, \ptraj) = 
        \begin{cases}
             \low{f}(\phi_1 U_{\langle a,b \rangle} \phi_2, \ptraj) & \text{if } t \geq a \\ 
            0 &\text{otherwise,}
        \end{cases} \\
        &\up{\Tilde{f}}(\phi_1 U_{\langle a,b \rangle} \phi_2, \ptraj) = \\ 
        &
        \begin{cases}
            \up{f}(\phi_1 U_{\langle a,b \rangle} \phi_2, \ptraj)
            & \hspace{-2mm} \text{if } t \geq b\\
            \displaystyle \max \hspace{-.5mm} \Big\{\hspace{-.5mm}\up{f}(\phi_1 U_{\langle a,b \rangle} \phi_2, \ptraj), \min_{t'\in[0,b]}\up{f}(\phi_1,\ptraj^{t'}) \hspace{-.5mm} \Big\} & \hspace{-2mm} \text{if } t \in [0,b) \\ 
            1 & \hspace{-2mm} \text{otherwise.}
        \end{cases}
    \end{align*}
\end{definition}

The differences between the \stori and \stori Monitor arise in the temporal operators. These operators seek to bound possible future robustness, and also quantify the robustness of the behavior already seen. This is apparent in the \stori Monitor with respect to the $\Diamond$ and $\square$ operators:

\begin{align*}
    \Tilde{f}(\Diamond_{\langle a,b \rangle} \phi, \ptraj) =  
    &
    \begin{cases}
        f(\Diamond_{\langle a,b \rangle} \phi, \ptraj) &\text{if } t \geq b \\
        \Big[\max\limits_{t \in \langle a,b \rangle} \low{f}(\phi,\ptraj^t), \;1\Big]& \text{if } a \leq t < b \\
        \big[0,1\big]& \text{if } t < a
    \end{cases} \\
    \Tilde{f}(\square_{\langle a,b \rangle } \phi, \ptraj) = 
    &
    \begin{cases}
        f(\square_{\langle a,b \rangle } \phi, \ptraj) &\text{if } t \geq b \\
        \Big[0, \; \min\limits_{t \in \langle a,b \rangle} \up{f}(\phi,\ptraj^t)\Big]& \text{if } a \leq t < b \\
        \big[0,1\big]& \text{if } t < a
    \end{cases} 
\end{align*} 

The case when $a\leq t < b$ is of particular interest. In the case of the $\square$ operator, the \stori is upper bounded by the best point in the partial trajectory, but is lower bounded by zero to account for possible future violation. In the case of the $\Diamond$ operator, the \stori is lower bounded by the best point in the partial trajectory, but is upper bounded by one to account for possible future ``perfect" satisfaction. 

The following theorem proves the correctness of \stori Monitor by showing that it always subsumes \stori.

\begin{lemma}
    \label{lemma:monitorupper}
    Let $\btraj$ be a belief trajectory over time window $[0,T]$ and $\btraj_t$ be a prefix of $\btraj$ where $t \in [0,T]$. Then, given STL formula $\phi$, the upper bound of the StoRI Monitor for $\btraj_t$ is monotonically decreasing with respect to $t$.
\end{lemma}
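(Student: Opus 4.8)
The plan is to prove a stronger, two-sided statement by structural induction on $\phi$: for any pair of prefixes with $t_1 \le t_2$, the monitor upper bound is non-increasing, $\up{\Tilde{f}}(\phi, \btraj_{t_2}) \le \up{\Tilde{f}}(\phi, \btraj_{t_1})$, and, simultaneously, the monitor lower bound is non-decreasing, $\low{\Tilde{f}}(\phi, \btraj_{t_2}) \ge \low{\Tilde{f}}(\phi, \btraj_{t_1})$. Carrying the lower-bound claim alongside is not optional but forced by the recursion: the negation rule $\up{\Tilde{f}}(\lnot\phi,\btraj) = 1 - \low{\Tilde{f}}(\phi,\btraj)$ expresses the upper bound of $\lnot\phi$ through the \emph{lower} bound of $\phi$, so the two monotonicities must be established together in a single induction.

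The base and Boolean cases are routine. For $\top$ and $\mu$ both bounds depend only on the belief $\btraj(0)$ at the initial instant, hence are constant in $t$ and trivially satisfy both inequalities. For negation, the inductive hypothesis gives $\low{\Tilde{f}}(\phi,\cdot)$ non-decreasing, so $1 - \low{\Tilde{f}}(\phi,\cdot)$ is non-increasing, which is the upper-bound claim for $\lnot\phi$; the lower-bound claim follows symmetrically. For conjunction, $\up{\Tilde{f}}(\phi_1\wedge\phi_2,\cdot) = \min\{\up{\Tilde{f}}(\phi_1,\cdot),\up{\Tilde{f}}(\phi_2,\cdot)\}$ is a minimum of two non-increasing functions and hence non-increasing, while the Boole--Fr\'echet lower bound $\max\{\low{\Tilde{f}}(\phi_1,\cdot)+\low{\Tilde{f}}(\phi_2,\cdot)-1,0\}$ is assembled from sums and a truncation of non-decreasing functions and is therefore non-decreasing.

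The until case is the crux and the main obstacle. Recall $\up{\Tilde{f}}(\phi_1 U_{\langle a,b\rangle}\phi_2,\btraj_t) = \max\{\up{f}(\phi_1 U_{\langle a,b\rangle}\phi_2,\btraj_t),\ \min_{t'\in[0,b]}\up{f}(\phi_1,\btraj_t^{t'})\}$. The difficulty is that the first (``already completed'') term is itself a maximum over the index set $\langle a,b\rangle\cap[0,t]$, which \emph{grows} as the prefix lengthens, so this term can genuinely increase with $t$; monotonicity of the whole expression is restored only by the second, optimistic-future term, which caps what any extension can still achieve. My plan is to argue that every quantity contributing to the value at $t_2$ is dominated by the value already present at $t_1$: contributions from satisfaction times $s \le t_1$ are controlled through the inductive hypothesis applied to $\phi_1,\phi_2$ on the corresponding suffixes, whereas any newly available satisfaction time $s\in(t_1,t_2]$ requires $\phi_1$ to hold throughout $[0,s]$, so its contribution is at most $\min_{t'\in[0,s]}\up{f}(\phi_1,\btraj_{t_2}^{t'})$, which I would relate back to the optimistic-future term $\min_{t'\in[0,b]}\up{f}(\phi_1,\btraj_{t_1}^{t'})$ present in $\up{\Tilde{f}}(\cdot,\btraj_{t_1})$.

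The subtle point, and where I expect the real work to lie, is that the plain (non-monitor) StoRI upper bound $\up{f}$ appearing inside the until rule is \emph{not} monotone under prefix growth: for an eventually-type subformula it increases while for a globally-type subformula it decreases, so one cannot simply invoke monotonicity of $\up{f}$ termwise, and the ``maximum over a growing index set'' observation alone does not close the argument. The clean way to eliminate this gap is to strengthen the induction hypothesis to the extension-optimal characterization that the monitor bounds coincide with the best and worst StoRI achievable over all completions of the prefix, i.e.\ $\up{\Tilde{f}}(\phi,\btraj_t) = \sup\{\up{f}(\phi,\btraj') : \btraj'|_{[0,t]} = \btraj_t\}$ and dually for the lower bound. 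Under this characterization the lemma is immediate, since every completion of the longer prefix $\btraj_{t_2}$ is also a completion of $\btraj_{t_1}$, so the defining supremum for $\up{\Tilde{f}}(\phi,\btraj_{t_2})$ is taken over a subset and can only be smaller. I would therefore verify this characterization against each recursive rule, with the until rule being the one demanding care about partial suffixes $\btraj_t^{t'}$ that are too short to evaluate a subformula, and then read off the monotonicity of the upper bound as a corollary.
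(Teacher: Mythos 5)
Your induction skeleton is sound, and for the until case your plan is essentially the paper's argument: the paper also fixes $t_1 \le t_2$, splits the maximum over the enlarged window $\langle a,t_2\rangle$ into the old part $\langle a,t_1\rangle$ and the new part $\langle t_1,t_2\rangle$, and caps the new contributions by the optimistic-future term $\min_{t'}\up{f}(\phi_1,\cdot)$ (its Cases 1--3 and the $S_1$/$S_2$ decomposition). Your decision to carry the lower-bound monotonicity alongside the upper bound in a joint structural induction is in fact a tightening of the paper, which proves the two lemmas separately and hides the induction behind ``without loss of generality, consider $\phi_1 U_{[a,b]}\phi_2$.'' You also correctly identify the real soft spot: the until monitor is built from \emph{plain} \stori values $\up{f}(\phi_i,\cdot)$ of the subformulas, not their monitors, so the inductive hypothesis does not plug in termwise, and those inner values are not monotone under prefix growth when $\phi_1,\phi_2$ are themselves temporal.

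The gap is in the step you use to close this: the ``extension-optimal characterization'' $\up{\Tilde{f}}(\phi,\btraj_t)=\sup\{\up{f}(\phi,\btraj') : \btraj' \text{ extends } \btraj_t\}$ is false, so monotonicity cannot be read off as a corollary of it. Counterexample: let $\phi = \Diamond_{[0,1]}\mu \,\wedge\, \square_{[0,1]}\lnot\mu$ and take a short prefix whose belief puts probability $\approx 0$ on the $\mu$-region. The monitor rules give $\up{\Tilde{f}}(\Diamond_{[0,1]}\mu,\ptraj)=1$ and $\up{\Tilde{f}}(\square_{[0,1]}\lnot\mu,\ptraj)=1-\max_{s}P\big(h(x(s))\ge 0\big)\approx 1$, so the conjunction rule yields $\up{\Tilde{f}}(\phi,\ptraj)\approx 1$. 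But for \emph{any} completion $\btraj'$, writing $M=\max_{s\in[0,1]}P\big(h(x(s))\ge 0\big)$, the \stori recursion gives $\up{f}(\Diamond_{[0,1]}\mu,\btraj')=M$ and $\up{f}(\square_{[0,1]}\lnot\mu,\btraj')=1-M$, hence $\up{f}(\phi,\btraj')=\min\{M,1-M\}\le \tfrac12$. The supremum over completions is thus at most $\tfrac12$, strictly below the monitor value: the monitor only \emph{subsumes} the achievable \stori (Theorem~\ref{theorem:monitorbounds}); it is not its envelope, because the Fr\'echet-style rules combine the conjuncts' optimistic futures independently, ignoring that one and the same future must serve both. (The characterization fails for a second, independent reason if completions are required to respect the robot dynamics, which is the natural reading in the planning setting.) The paper avoids needing any such characterization by closing the until case through direct comparison of the two monitor terms at $t_1$ and $t_2$ --- though it gets away with this only by implicitly evaluating the inner $\up{f}(\phi_i,\cdot)$ terms identically on both prefixes, which is automatic when $\phi_1,\phi_2$ are Boolean combinations of predicates but is exactly the nesting issue you flagged. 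To repair your proof, you would need either to restrict attention to that flat case (as the paper effectively does) or to strengthen the induction with a correct statement about how the inner plain-\stori terms evolve, not with the sup-over-completions identity.
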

\begin{proof}
    The proof is provided in the Appendix. 
\end{proof}

\begin{lemma}
    \label{lemma:monitorlower}
    Let $\btraj$ be a belief trajectory over time window $[0,T]$ and $\btraj_t$ be a prefix of $\btraj$ where $t \in [0,T]$. Then, given STL formula $\phi$, the lower bound of the StoRI Monitor for $\btraj_t$ is monotonically increasing with respect to $t$.
\end{lemma}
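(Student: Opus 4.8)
The plan is to prove the claim by structural induction on $\phi$, following the recursive definition of the monitor in Definition~\ref{def:Monitor}. Fix $t_1 \le t_2$ in $[0,T]$ with corresponding prefixes $\btraj_{t_1}$ and $\btraj_{t_2}$, so that the goal is $\low{\Tilde{f}}(\phi,\btraj_{t_1}) \le \low{\Tilde{f}}(\phi,\btraj_{t_2})$. Since the lower bound of $\lnot\phi$ is expressed through the upper bound of $\phi$, this statement is coupled with Lemma~\ref{lemma:monitorupper}; I would either invoke Lemma~\ref{lemma:monitorupper} directly, as it is already established, or run a single simultaneous induction that proves both monotonicity statements at once.

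For the base cases $\top$ and $\mu$ the monitor coincides with the \stori, whose value depends only on the initial belief $\btraj(0)$, which is shared by every prefix; hence the lower bound is constant in $t$ and the claim is immediate. For negation, $\low{\Tilde{f}}(\lnot\phi,\btraj_t) = 1 - \up{\Tilde{f}}(\phi,\btraj_t)$, and Lemma~\ref{lemma:monitorupper} states that $\up{\Tilde{f}}(\phi,\btraj_t)$ is non-increasing in $t$, so the difference is non-decreasing. For conjunction, $\low{\Tilde{f}}(\phi_1\land\phi_2,\btraj_t) = \max\{\low{\Tilde{f}}(\phi_1,\btraj_t)+\low{\Tilde{f}}(\phi_2,\btraj_t)-1,\,0\}$; by the inductive hypothesis each summand is non-decreasing, hence so is their sum, and composing with $\max\{\cdot,0\}$ preserves monotonicity.

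The Until operator is the crux. For $\phi=\phi_1 U_{\langle a,b\rangle}\phi_2$ and prefix length $t<a$ the monitor returns $0$, whereas for $t\ge a$ it returns $\low{f}(\phi_1 U_{\langle a,b\rangle}\phi_2,\btraj_t)\ge 0$ by Definition~\ref{def:stori}; so no decrease can occur as the prefix length crosses $a$, and it suffices to show that $\low{f}(\phi_1 U_{\langle a,b\rangle}\phi_2,\btraj_s)$ is non-decreasing for $s\ge a$. Writing this as $\max_{\tau\in\langle a,b\rangle\cap[0,s]} g(\tau,s)$ with $g(\tau,s)=\max\{\low{f}(\phi_2,\btraj_s^{\tau})+\min_{t'\in[0,\tau]}\low{f}(\phi_1,\btraj_s^{t'})-1,\,0\}$, the argument rests on two facts: (i) as $s$ grows the feasible set $\langle a,b\rangle\cap[0,s]$ of switching times enlarges, so the outer maximum is taken over a superset; and (ii) for a fixed $\tau$ already inside the observed window, $g(\tau,\cdot)$ does not decrease as the prefix is extended. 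A maximum of pointwise non-decreasing values over a growing index set is non-decreasing, which closes the case.

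I expect ingredient (ii) to be the main obstacle. Extending the prefix lengthens the suffixes $\btraj_s^{\tau}$ and $\btraj_s^{t'}$ on which the subformula \stori{}s are evaluated, and a raw \stori{} is not monotone under extension of its argument: a nested $\square$, for example, can make $\low{f}(\phi_2,\btraj_s^{\tau})$ decrease as more of its window is revealed. The way I would secure (ii) is to score a subformula temporal operator whose interval is not yet covered by the available suffix using its monitor value rather than an optimistically truncated \stori{} — equivalently, to read the inner scores in $g$ as monitor lower bounds $\low{\Tilde f}(\phi_1,\cdot)$ and $\low{\Tilde f}(\phi_2,\cdot)$. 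Under this reading an as-yet-uncovered $\square$ contributes its conservative value until its window closes and then stays fixed, so each inner term is non-decreasing in $s$, the minimum over $t'$ and the sum inherit this, and fact (ii) becomes exactly the inductive hypothesis applied to $\phi_1$ and $\phi_2$. Combined with the base and Boolean cases, this completes the induction.
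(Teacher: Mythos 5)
Your proof takes a genuinely different and substantially more rigorous route than the paper's. The paper's proof is a few lines: it declares ``without loss of generality, consider $\phi = \psi_1 U_{[a,b]}\psi_2$,'' notes that for $t \geq a$ the monitor lower bound equals the raw \stori lower bound $\low{f}(\psi_1 U_{\langle a,b\rangle}\psi_2,\btraj_t)$ --- which it simply \emph{asserts} to be monotonically increasing in $t$ --- and observes that for $t < a$ the value is $0$, the smallest possible value, so no decrease occurs at the transition. There is no structural induction, no treatment of negation or conjunction (your coupling with Lemma~\ref{lemma:monitorupper} via $\low{\Tilde{f}}(\lnot\phi,\cdot) = 1 - \up{\Tilde{f}}(\phi,\cdot)$ has no counterpart in the paper), and no justification of why the inner terms of the until expression cannot decrease as the prefix grows; your ingredient (i), the maximum over a growing index set, is the entire implicit content of the paper's assertion.

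Your ingredient (ii) is where you go beyond the paper, and your suspicion is well founded: the monotonicity the paper asserts is in fact false when $\psi_1$ or $\psi_2$ contain nested temporal operators, under the truncation convention the paper itself uses in the appendix (maxima and minima restricted to the observed window). Take $\phi = \top\, U_{[0,10]}(\square_{[0,5]}\mu)$ and a belief trajectory for which $\mu$ holds with probability one on $[0,1]$ and probability zero afterwards: the prefix of length $1$ scores $1$ (the truncated inner minimum only sees $[0,1]$), while the prefix of length $6$ scores $0$, so the lower bound decreases. Your fix --- evaluating inner temporal operators whose windows are not yet covered by their monitor values rather than by truncated raw \stori{}s --- is precisely the reading under which the lemma becomes true, and it correctly turns ingredient (ii) into the inductive hypothesis. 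The one thing to state explicitly is that this is a reinterpretation of Definition~\ref{def:Monitor}, whose until clause literally invokes the raw $\low{f}$ on the partial trajectory; as written, your argument proves the lemma for a (sensibly) corrected monitor, not for the definition as printed. That is a defect of the paper's definition and proof rather than of your reasoning, but your proof should flag the discrepancy rather than fold the redefinition silently into the induction.
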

\begin{proof}
    Without loss of generality, consider $\phi = \psi_1 U_{[a,b]} \psi_2$. From Definition \ref{def:Monitor}, the lower bound of the monitor of $\phi$ is:
    \begin{align*}
        &\low{\Tilde{f}}(\phi_1 U_{\langle a,b \rangle} \phi_2, \ptraj) = \\
        & = \low{f}(\phi_1 U_{\langle a,b \rangle} \phi_2, \ptraj) \\
        & = \max_{t\in\langle a,b \rangle} \Big\{ \max \big\{\low{f}(\phi_2,\ptraj^t) + \min_{t'\in[0,t]}\low{f}(\phi_1,\ptraj^{t'}) - 1,0\big\} \Big\},
    \end{align*}
    for $t \geq a$, which monotonically increases with respect to $t$. For $t \not\geq a$, the monitor takes on a value of 0. Because 0 is the lowest value the \stori Monitor can take, it will never decrease with respect to $t$. Therefore, the lower bound of the monitor monotonically increases with respect to t.
\end{proof}

\begin{lemma}
    \label{lemma:subsume}
    Let $\btraj$ be a belief trajectory over time window $[0,T]$. Then, given STL formula $\phi$, the \stori Monitor of $\btraj$ subsumes the \stori of $\btraj$.
\end{lemma}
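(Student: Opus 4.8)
The plan is to prove the two containment inequalities
$$\low{\Tilde{f}}(\phi,\btraj) \le \low{f}(\phi,\btraj) \qquad\text{and}\qquad \up{f}(\phi,\btraj) \le \up{\Tilde{f}}(\phi,\btraj)$$
simultaneously by structural induction on the STL formula $\phi$. Together these assert exactly that the Monitor interval $[\low{\Tilde{f}},\up{\Tilde{f}}]$ contains the \stori interval $[\low{f},\up{f}]$, i.e.\ that $f(\phi,\btraj)\subseteq\Tilde{f}(\phi,\btraj)$. Throughout I would use that all four quantities lie in $[0,1]$, and that since $\btraj$ is the full trajectory on $[0,T]$ the prefix length $t=T$ satisfies $t\ge a$ (and trivially $t\ge 0$) for every temporal interval $\langle a,b\rangle$ appearing in $\phi$, which selects the non-trivial branch of the Until definition.

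For the base cases $\top$ and $\mu$, Definition \ref{def:Monitor} sets $\Tilde{f}(\cdot,\btraj)=f(\cdot,\btraj)$, so the two intervals are identical and containment holds with equality; this anchors the induction. The two Boolean cases then follow from the inductive hypothesis together with elementary monotonicity. For negation, the map $x\mapsto 1-x$ is order-reversing, so $\up{f}(\phi)\le\up{\Tilde{f}}(\phi)$ gives $\low{\Tilde{f}}(\lnot\phi)=1-\up{\Tilde{f}}(\phi)\le 1-\up{f}(\phi)=\low{f}(\lnot\phi)$, and symmetrically for the upper bound. For conjunction, since $\low{\Tilde{f}}(\phi_i)\le\low{f}(\phi_i)$ and both the Boole--Fr\'echet lower bound $\max\{\cdot+\cdot-1,0\}$ and the upper bound $\min\{\cdot,\cdot\}$ are monotone in each argument, the Monitor lower bound is at most the \stori lower bound and the Monitor upper bound is at least the \stori upper bound. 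No case analysis beyond monotonicity is needed here.

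The Until case is where I expect the only real care, but it turns out to be essentially immediate because Definition \ref{def:Monitor} writes the Monitor's Until directly in terms of the \stori of the same formula rather than recursively in terms of $\Tilde{f}$. For the lower bound, since $t=T\ge a$ the Monitor returns exactly $\low{f}(\phi_1 U_{\langle a,b\rangle}\phi_2,\btraj)$, giving equality; and even in the degenerate case $T<a$ the Monitor returns $0\le\low{f}$, so the inequality is preserved. For the upper bound, the Monitor returns $\max\{\up{f}(\phi_1 U_{\langle a,b\rangle}\phi_2,\btraj),\,\min_{t'\in[0,b]}\up{f}(\phi_1,\btraj^{t'})\}$, which carries $\up{f}(\phi_1 U_{\langle a,b\rangle}\phi_2,\btraj)$ as one of its arguments and therefore dominates it. Thus the Until contributes containment without even invoking the inductive hypothesis on its subformulas.

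Collecting the base, Boolean, and temporal cases closes the induction and yields $f(\phi,\btraj)\subseteq\Tilde{f}(\phi,\btraj)$, which is the claimed subsumption. The only subtlety worth stating explicitly is the role of the prefix length $t=T$ in activating the non-trivial branches of the Until definition; everything else reduces to the monotonicity of $x\mapsto 1-x$, $\min$, and $\max$ and to the bookkeeping that all bounds remain in $[0,1]$. I would not need Lemmas \ref{lemma:monitorupper} or \ref{lemma:monitorlower} for this argument, since those concern how the Monitor evolves as the prefix grows, whereas here the comparison is made at the single full trajectory $\btraj$.
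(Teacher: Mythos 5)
Your proof is correct, and its core coincides with the paper's: the only place where the \stori and the \stori Monitor genuinely differ is the Until clause, and there the containment is essentially definitional --- the Monitor's lower bound equals $\low{f}(\phi_1 U_{\langle a,b\rangle}\phi_2,\btraj)$ when $t \geq a$ (and is $0$ otherwise), while its upper bound is a $\max$ having $\up{f}(\phi_1 U_{\langle a,b\rangle}\phi_2,\btraj)$ as one of its arguments (and is $1$ in the degenerate branch). The paper's proof consists of exactly this observation and nothing more: it writes down the two inequalities for the Until case, implicitly treating Until as the only case that matters. Where you differ is that you wrap this observation in a full structural induction, checking $\top$, $\mu$, $\lnot$, and $\land$ explicitly via the monotonicity of $x \mapsto 1-x$, of $\min$, and of the Boole--Fr\'echet lower bound. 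That scaffolding is not redundant: the Monitor's Boolean clauses recurse through $\Tilde{f}$ of the subformulas rather than through $f$, so for a formula such as $\lnot(\phi_1 U_I \phi_2)$ the top-level containment genuinely requires the inductive hypothesis on the Until subformula together with the order-reversing step you supply. In that sense your proof is the complete version of the argument the paper only sketches, and it is the version Theorem \ref{theorem:monitorbounds} actually needs for arbitrary formulas. One small caveat: your opening claim that the full trajectory automatically satisfies $T \geq a$ for every interval $\langle a,b\rangle$ appearing in $\phi$ is unjustified --- nothing forces the planning horizon to exceed $a$ --- but this is harmless, since you cover the $T < a$ and $t < 0$ branches anyway; you could simply delete that claim and let the case analysis speak for itself.
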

\begin{proof}
    We show that for any belief trajectory $\btraj$, the value of the StoRI Monitor's upper bound is greater than or equal to the value of the StoRI's upper bound, i.e.,
    \begin{align*}
        &\up{\Tilde{f}}(\phi_1 U_{\langle a,b \rangle} \phi_2, \btraj) = \\
        &= \max \Big\{\up{f}(\phi_1 U_{\langle a,b \rangle} \phi_2, \btraj), \min_{t'\in[0,b]}\up{f}(\phi_1,\btraj^{t'})\Big\} \\
        &\geq \up{f}(\phi_1 U_{\langle a,b \rangle} \phi_2, \btraj) \hspace{1.8cm} \text{for } t \geq 0 \\
        &\text{and } \up{\Tilde{f}}(\phi_1 U_{\langle a,b \rangle} \phi_2, \btraj) = 1 \\
        &\geq \up{f}(\phi_1 U_{\langle a,b \rangle} \phi_2, \btraj) \hspace{1cm} \text{for } t \not\geq 0,
    \end{align*}
    and the value of the StoRI Monitor's lower bound is less than or equal to the value of the StoRI's lower bound, i.e.,
    \begin{align*}
        &\low{\Tilde{f}}(\phi_1 U_{\langle a,b \rangle} \phi_2, \btraj) = \low{f}(\phi_1 U_{\langle a,b \rangle} \phi_2, \btraj) \text{ for when } t \geq a \\
        &\low{\Tilde{f}}(\phi_1 U_{\langle a,b \rangle} \phi_2, \btraj) = 0 \leq \low{f}(\phi_1 U_{\langle a,b \rangle} \phi_2, \btraj) \text{ for when } t \not\geq a
    \end{align*}
    Therefore, \stori Monitor of $\btraj$ subsumes \stori of $\btraj$.
\end{proof}

\begin{theorem}
    \label{theorem:monitorbounds}
    Let $\btraj$ be a belief trajectory over time window $[0,T]$ and $\btraj_t$ be a prefix of $\btraj$ where $t \in [0,T]$.  Then, given STL formula $\phi$, the StoRI Monitor for $\btraj_t$ subsumes the StoRI of $\btraj$ for all $t \in [0,T]$, i.e., $\forall t \in [0,T]$, $f(\phi,\btraj) \subseteq \Tilde{f}(\phi,\btraj_t)$.
\end{theorem}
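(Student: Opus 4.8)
The plan is to deduce the claim from the three preceding lemmas, using the observation that the full trajectory $\btraj$ is nothing but its own prefix at the terminal time, $\btraj = \btraj_T$. With this, the endpoint case $t = T$ is handled by Lemma~\ref{lemma:subsume}, and the monotonicity results of Lemmas~\ref{lemma:monitorupper} and~\ref{lemma:monitorlower} then transport the containment back to every shorter prefix $\btraj_t$ with $t \leq T$.

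First I would invoke Lemma~\ref{lemma:subsume} at the terminal time to obtain $f(\phi,\btraj) \subseteq \Tilde{f}(\phi,\btraj_T)$, that is,
\[
\low{\Tilde{f}}(\phi,\btraj_T) \leq \low{f}(\phi,\btraj), \qquad \up{f}(\phi,\btraj) \leq \up{\Tilde{f}}(\phi,\btraj_T).
\]
Next, fixing an arbitrary $t \in [0,T]$, I would apply Lemma~\ref{lemma:monitorlower}, which states that the monitor's lower bound is nondecreasing in prefix length, to get $\low{\Tilde{f}}(\phi,\btraj_t) \leq \low{\Tilde{f}}(\phi,\btraj_T)$; and Lemma~\ref{lemma:monitorupper}, which states that the monitor's upper bound is nonincreasing, to get $\up{\Tilde{f}}(\phi,\btraj_T) \leq \up{\Tilde{f}}(\phi,\btraj_t)$.

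Chaining these with the StoRI-interval inequality $\low{f}(\phi,\btraj) \leq \up{f}(\phi,\btraj)$ yields
\begin{align*}
\low{\Tilde{f}}(\phi,\btraj_t) &\leq \low{\Tilde{f}}(\phi,\btraj_T) \leq \low{f}(\phi,\btraj) \\
&\leq \up{f}(\phi,\btraj) \leq \up{\Tilde{f}}(\phi,\btraj_T) \leq \up{\Tilde{f}}(\phi,\btraj_t),
\end{align*}
hence $[\low{f}(\phi,\btraj),\up{f}(\phi,\btraj)] \subseteq [\low{\Tilde{f}}(\phi,\btraj_t),\up{\Tilde{f}}(\phi,\btraj_t)]$, i.e.\ $f(\phi,\btraj) \subseteq \Tilde{f}(\phi,\btraj_t)$. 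Since $t$ was arbitrary, the containment holds for all $t \in [0,T]$, completing the argument.

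I expect the assembly above to be essentially mechanical; the substantive work is carried by the lemmas, not by their combination. Consequently the real obstacle is upstream: ensuring that Lemmas~\ref{lemma:monitorupper} and~\ref{lemma:monitorlower} are genuinely established for arbitrary, recursively constructed $\phi$ rather than only for a single Until formula. I would therefore verify that the recursion in Definition~\ref{def:Monitor} propagates monotonicity through the Boolean cases --- in particular that negation correctly interchanges the increasing lower bound with the decreasing upper bound, and that conjunction preserves both monotonicity directions --- so that the two monotonicity lemmas may legitimately be applied to the general $\phi$ appearing in this theorem.
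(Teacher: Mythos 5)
Your proposal is correct and follows essentially the same route as the paper's own proof: apply Lemma~\ref{lemma:subsume} at $t=T$ (noting $\btraj_T = \btraj$), then use the monotonicity of the monitor's bounds from Lemmas~\ref{lemma:monitorupper} and~\ref{lemma:monitorlower} to transfer the containment to every prefix $\btraj_t$ with $t \leq T$. Your explicit chaining of the inequalities, and your closing caveat that the monotonicity lemmas should be checked through the full recursive structure of $\phi$ (the paper proves them only for the Until case ``without loss of generality''), merely make explicit what the paper leaves terse.
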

\begin{proof}[Proof]
    From Lemma \ref{lemma:subsume}, when $t = T$, the \stori Monitor of $\btraj_t$ subsumes \stori of $\btraj$. From Lemmas \ref{lemma:monitorupper} and \ref{lemma:monitorlower}, for every $t < T$, the \stori Monitor of $\btraj_{t}$ subsumes the \stori Monitor of $\btraj_{T}$. Then it follows that the the \stori monitor of $\btraj_t$ subsumes the \stori of $\btraj$ for all $t \leq T$.
\end{proof}


\begin{example}
    \label{ex:monitor}
    \emph{
    Recall the STL formula in Example~\ref{ex:stori} and sample trajectories in Fig. \ref{fig:examples}.
    The \stori Monitor of the trajectories at different points in their evolution are shown in the plots in Fig. \ref{fig:examples}. Note that they subsume the \stori of the final trajectory and the interval gets smaller with time. 
    }
\end{example}

\section{Motion Planning Algorithm}
\label{sec:motion-planning}
This section presents two sampling-based motion planners that utilize \stori and the \stori Monitor. The first planner finds solutions that satisfy a given \storm constraint, and the second finds solutions that directly optimize for the \storm. 

\subsection{StoRI-\X}
A kinodynamic sampling-based tree planner \X grows a motion tree in the state space $X$ according to the robot dynamics through sampling and extension procedures. 
We generalize planner \X to \stori-\X (Alg. \ref{alg:StoRIX}) to generate plans for System~\eqref{eq:SDE} that are guaranteed to satisfy a given lower bound $\kappa \in [0,1]$ on \storm with respect to STL formula $\phi$.  

\begin{algorithm}
\caption{\stori-\X$(X,U,\phi,\hat{x}_0,N,\kappa)$}
\label{alg:StoRIX}
$P_0 \gets 0^{n \times n}$\;
$\mathbb{V} \gets \{(\hat{x}_0,P_0)\},\mathbb{E} \gets \emptyset $\;
$\mathbb{G} \gets \{\mathbb{V},\mathbb{E}\}$\;
\For{N iterations}{
    $\hat{x}_{rand},t_{rand},\hat{x}_{near},P_{near} \gets sample(X,\mathbb{V},T)$\;
    $\hat{x}_{new},P_{new} \gets $ Extend($\hat{x}_{near},P_{near},U$)\;
    \If{$\up{\Tilde{f}}(\phi,(\hat{x}_0,P_0)...(\hat{x}_{new},P_{new})) > \kappa$}{
        $\mathbb{V} \gets \mathbb{V} \cup \{(\hat{x}_{new},P_{new})\}$ \;
        $\mathbb{E} \gets \mathbb{E} \cup \{[(\hat{x}_{near},P_{near}),(\hat{x}_{new},P_{new})]\}$\;
        \If{$\low{f}(\phi,(\hat{x}_0,P_0)...(\hat{x}_{new},P_{new})) > \kappa$}{
            \Return $(\hat{x}_0,P_0)...(\hat{x}_{new},P_{new})$\;
        }
    }
}
\Return $\emptyset$ 
\end{algorithm}


The algorithm first initializes the tree $\mathbb{G}$ with the belief of $x(0)$. Each node in this tree is a tuple $(\hat{x},P)$ of the mean and covariance of the distribution that describes the state. At every iteration, the algorithm samples a random state $\hat{x}_{rand}$ and time $t_{rand}$ and computes the nearest existing node $(\hat{x}_{near},P_{near})$. $t_{rand}$ is sampled from the time-horizon of the STL formula as defined in \cite{rosi}, and the nearest node is selected using a distance metric that accounts for both state distance $\|\hat{x}_{rand} - \hat{x}_{near}\|$ and time distance $|t_{rand} - t_{near}|$. Second, a random control input $u \in U$ and time duration $t$ is sampled and propagate the system from $(\hat{x}_{near},P_{near})$ to generate a new belief node $(\hat{x}_{new},P_{new})$. Third, the \stori Monitor of the partial belief trajectory $\ptraj = (\hat{x}_0,P_0)(\hat{x}_1,P_1)\cdots(\hat{x}_{new},P_{new})$ is computed for the formula $\phi$. If the \stori Monitor has an upper bound $\up{\Tilde{f}}(\phi,\ptraj) \leq \kappa$, $\ptraj$ has already violated the STL Specification and the new node is discarded. Otherwise, we add $(\hat{x}_{new},P_{new})$ and the edge $\big((\hat{x}_{near},P_{near}),(\hat{x}_{new},P_{new})\big)$ to $\mathbb{G}$. Finally, $\ptraj$ is a solution if the \storm $\low{f}(\phi,\ptraj) > \kappa$. This process repeats until a solution is found or for a maximum of $N$ iterations.

\begin{theorem}[Probabilistic Completeness]
\label{theorem:complete}
Planner \stori-\X in Alg.~\ref{alg:StoRIX} is probabilistically complete if the underlying planner \X is probabilistically complete.
\end{theorem}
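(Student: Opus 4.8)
The plan is to argue by reduction to the probabilistic completeness of the underlying planner \X. A solution for \stori-\X is a belief trajectory $\btraj^\star$ over $[0,T]$ whose \storm satisfies $\low{f}(\phi,\btraj^\star) > \kappa$, so probabilistic completeness here means that whenever such a $\btraj^\star$ exists, the probability that Alg.~\ref{alg:StoRIX} returns a solution tends to $1$ as the iteration budget $N \to \infty$. Since \X already explores the reachable belief space in a probabilistically complete manner, the only way \stori-\X could fail where \X succeeds is if the monitor-admissibility test on line~7 pruned a node lying on (a neighborhood of) $\btraj^\star$. The core of the proof is therefore to show that this pruning can never eliminate progress along a strictly feasible solution, after which \stori-\X inherits the exploration behavior of \X.

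First I would show that every prefix of a feasible solution passes the line~7 test. Let $\btraj^\star_t$ denote the prefix of $\btraj^\star$ up to time $t$. By Theorem~\ref{theorem:monitorbounds}, the \stori Monitor of any prefix subsumes the \stori of the full trajectory, i.e. $f(\phi,\btraj^\star) \subseteq \Tilde{f}(\phi,\btraj^\star_t)$, so that
\begin{equation*}
    \up{\Tilde{f}}(\phi,\btraj^\star_t) \;\geq\; \up{f}(\phi,\btraj^\star) \;\geq\; \low{f}(\phi,\btraj^\star) \;>\; \kappa, \qquad \forall\, t \in [0,T].
\end{equation*}
Hence the admissibility condition $\up{\Tilde{f}}(\phi,\ptraj) > \kappa$ holds along the entire solution, and no node of $\btraj^\star$ is ever discarded. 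Moreover, at the terminal node the return condition $\low{f}(\phi,\ptraj) > \kappa$ on line~10 is exactly $\low{f}(\phi,\btraj^\star) > \kappa$, which holds by assumption; so once a trajectory reaching the terminal belief is grown, \stori-\X reports it as a solution.

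Next I would bridge the gap between the exact solution $\btraj^\star$ and the trajectories actually produced by the sampling-and-extension routine. Because $\low{f}(\phi,\btraj^\star) > \kappa$ strictly, and because the \stori functional is continuous in the belief trajectory (the predicate probabilities $P(h(x)\geq 0)$ depend continuously on the mean and covariance, and the recursive $\min$/$\max$ and affine operations in Def.~\ref{def:stori} preserve continuity), there exists a tube of belief trajectories around $\btraj^\star$ whose members also have \storm $> \kappa$ and, by the displayed bound, whose every prefix passes line~7. Invoking the probabilistic completeness of \X, with probability approaching $1$ the planner grows a trajectory lying inside this tube; by the previous paragraph every node of that trajectory survives pruning and the terminal node triggers a return. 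Thus the probability that \stori-\X returns a solution tends to $1$, which is the claim.

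The main obstacle I anticipate is making this robustness/continuity step rigorous: probabilistic completeness of \X yields trajectories arbitrarily close to $\btraj^\star$ rather than $\btraj^\star$ itself, so the argument hinges on establishing that strict feasibility ($\low{f} > \kappa$) is preserved on an open neighborhood of trajectories. This requires continuity (or at least lower semicontinuity) of the \stori map under perturbations of the belief trajectory, together with a discretization argument ensuring that the piecewise belief trajectories assembled from the tree converge to, and inherit the strict-feasibility margin of, the continuous solution. By contrast, the purely logical part---that the monitor never over-prunes a feasible prefix---is immediate from Theorem~\ref{theorem:monitorbounds} and is not where the difficulty lies.
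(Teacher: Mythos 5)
Your proposal takes essentially the same route as the paper's proof: both reduce the claim to the probabilistic completeness of \X by invoking Theorem~\ref{theorem:monitorbounds} to show that the pruning test can never discard a node lying on (a prefix of) a feasible solution, so \stori-\X inherits the exploration guarantees of \X. The paper's proof is only three sentences and leaves implicit the continuity/tube argument you develop in your final paragraphs; that part of your write-up is a more careful elaboration of the same reduction, not a different approach.
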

\begin{proof}[Proof]
    \stori-\X mimics the behavior of planner \X, only modifying its validity check. From Theorem \ref{theorem:monitorbounds}, this modification only rejects nodes that are guaranteed to violate the STL Specification $\phi$. Therefore, if a solution exists, \stori-\X will find it with probability $1$ as $N \rightarrow \infty$. 
\end{proof}

\subsection{Asymptotically Optimal StoRI-\X (AO-StoRI-\X)}
Since the \storm allows us to compute a quantitative value of robustness, we can also optimize for the \storm of a belief trajectory in a sampling-based motion planner. This is enabled through the AO-$\mathcal{A}$ meta-algorithm in \cite{ao-rrt}, by repeatedly calling \stori-\X with increasing $\kappa$ bounds by setting $\kappa$ as the \storm of the previous solution at 
each iteration.

\begin{algorithm}
\caption{AO-StoRI-\X($X,U,\phi,\hat{x}_0,N)$)}
\label{alg:frost}
$\kappa \gets 0$\;
\While{$N \neq 0$}{
$\btraj,n \gets $StoRI-\X($X,U,\phi,\hat{x}_0,N,\kappa)$)\;
$\kappa \gets \low{f}(\phi,\btraj)$\; 
$N \gets N - n$\;
}
\Return $\btraj$ if solution found, No Solution otherwise
\end{algorithm}

\section{Evaluations}
We evaluate efficacy and efficiency of the proposed measure and planners subject to a variety of STL specifications. The algorithms are implemented with \X= RRT \cite{RRT}, i.e., \stori-RRT. All algorithms are implemented in the Open Motion Planning Library (OMPL)~\cite{sucan2012the-open-motion-planning-library}, and computations were performed on 3.9 GHz CPU and 64 GB of RAM. The implementation is readily available on GitHub \cite{storirepo}.

We considered a noisy second-order unicycle system whose stochastic dynamics are given by:
    $dx = v \cos(\theta) dt + dw$, 
    $dy = v \sin(\theta) dt + dw$,
    $d\theta = u_\omega dt + dw$,
    $dv = u_a dt + dw$
where $u_a$ and $u_\omega$ are the acceleration and steering angle inputs. We linearized the dynamics according to the feedback linearization in \cite{unicycle}. We evaluate the \storm and \stori Monitor using sequences of points sampled from the continuous trajectory with $dt = 0.15$ seconds, and propagate dynamics and uncertainty according to the process in \cite{simon2006optimal}. 
The beliefs at the sampled times are described by:
\begin{align}
    \hat{x}_{k+1} &= A'\hat{x}_k + B'u_k,
    \label{eq:DT-mean-dynamics}\\
    P_{k+1} &= A'P_kA'^T + Q'
    \label{eq:DT-covariance-dynamics}
\end{align}
where
\begin{align}
    A' &= 
    \begin{bmatrix}
        1 & 0.15 & 0 & 0 \\
        0 & 1 & 0 & 0 \\
        0 & 0 & 1 & 0.15 \\
        0 & 0 & 0 & 1 \\
    \end{bmatrix} \\
    B' &= 
    \begin{bmatrix}
        0.01125 & 0 \\
        0.15 & 0 \\
        0 & 0.01125 \\
        0 & 0.15 \\
    \end{bmatrix} \\
    Q' &= 1e^{-6}\times 
    \begin{bmatrix}
        10 & 1 & 1 & 1 \\
        1 & 10 & 1 & 1 \\
        1 & 1 & 10 & 1 \\
        1 & 1 & 1 & 10 \\
    \end{bmatrix}
\end{align}

Our $sample$ procedure uses the distance metric $\|\hat{x}_{rand} - \hat{x}_{near}\| + 0.25 \times|t_{rand} - t_{near}|$.

We considered environments in Figs. \ref{fig:examples}, \ref{fig:form2}, and \ref{fig:form4} respectively  with STL formulae $\varphi_1$ in \eqref{eq:form1}, 
\begin{align}
    \varphi_2 &= \lnot O U_{[0,10]} A \land \lnot O U_{[0,10]} B, \label{eq:form2}\\
    &O = x \geq 1.5 \land x \leq 3.5 \land y \geq -0.5 \land y \leq 0.5, \notag\\
    &A = x \geq 2 \land x \leq 3 \land y \geq 1, \notag\\
    &B = x \geq 2 \land x \leq 3 \land y \leq -1, \notag
    \end{align}
    \begin{align}
    \varphi_3 &= (\text{Puddle} \rightarrow \lnot \text{Charge} U_{[0,3]} \text{Carpet}) U _{[0,10]} \text{Charge}, \label{formeq:3}\\
    &\text{Puddle} = y \leq 2.5 \land x \geq 2 \land x \leq 3, \notag\\
    &\text{Charge} = y \geq 2 \land x \geq 4, \notag\\
    &\text{Carpet} = y \leq 1 \land x \geq 4. \notag
\end{align}
Here, $\varphi_2$ requires avoiding (black) region $O$ until both regions $A$ and $B$ are visited within 10 minutes in Fig. \ref{fig:form2}. $\varphi_3$ requires the robot to go to the charger within 10 minutes and also that, if it visits the puddle, it must avoid the charger until it visits the carpet within 3 minutes of visiting the puddle in Fig. \ref{fig:form4}. For all formulas, we also require the robot to remain in the workspace for the duration of the mission. 

\subsection{Case Study 1 - Computation Time vs \stori Threshold}

This case study seeks to analyze the relationship between computation time and the \storm constraint $\kappa$. Specifically, we study how \stori-RRT performs 
for formula $\varphi_2$, in the environment in Fig. \ref{fig:form2}. 
We ran $100$ trials, with a maximum computation time of $300$ seconds for each trial. 
Table \ref{tab:bench1} reports the computation time and success rate for different $\kappa$ values. We see that computation time increases and success rate decreases as the $\kappa$ threshold increases. This is due to the increased difficulty of satisfying the robustness constraint.

\begin{table}[h]
    \centering
    \caption{Benchmarking Results for Case Study 1}
    \begin{tabular}{|c|c|c|}
    \hline
        \storm Threshold $\kappa$ & Computation Time (s) & Success Rate\\ 
        \hline
        0.50 & 60.10 $\pm$ 64.87 & 97\%\\
        0.70 & 71.37 $\pm$ 67.48 & 97\%\\
        0.90 & 83.30 $\pm$ 69.35 & 90\%\\
        0.95 & 92.48 $\pm$ 81.56 & 86\%\\
        \hline
    \end{tabular}
    \label{tab:bench1}
\end{table}

Fig. \ref{fig:form2} shows two sample trajectories for this specification, where the ellipses represent the 90\% confidence
bounds of their uncertainty. Trajectory 1 uses $\kappa = 0.9$ and goes through the wider opening on the left. In contrast, Trajectory 2 uses $\kappa = 0.5$ and finds less robust paths that go through the narrow opening on the right. 

\begin{figure}[t]
\centering
\begin{subfigure}{0.235\textwidth}
    \centering
    \includegraphics[width=\linewidth,trim={1cm 0cm 2cm 0cm},clip]{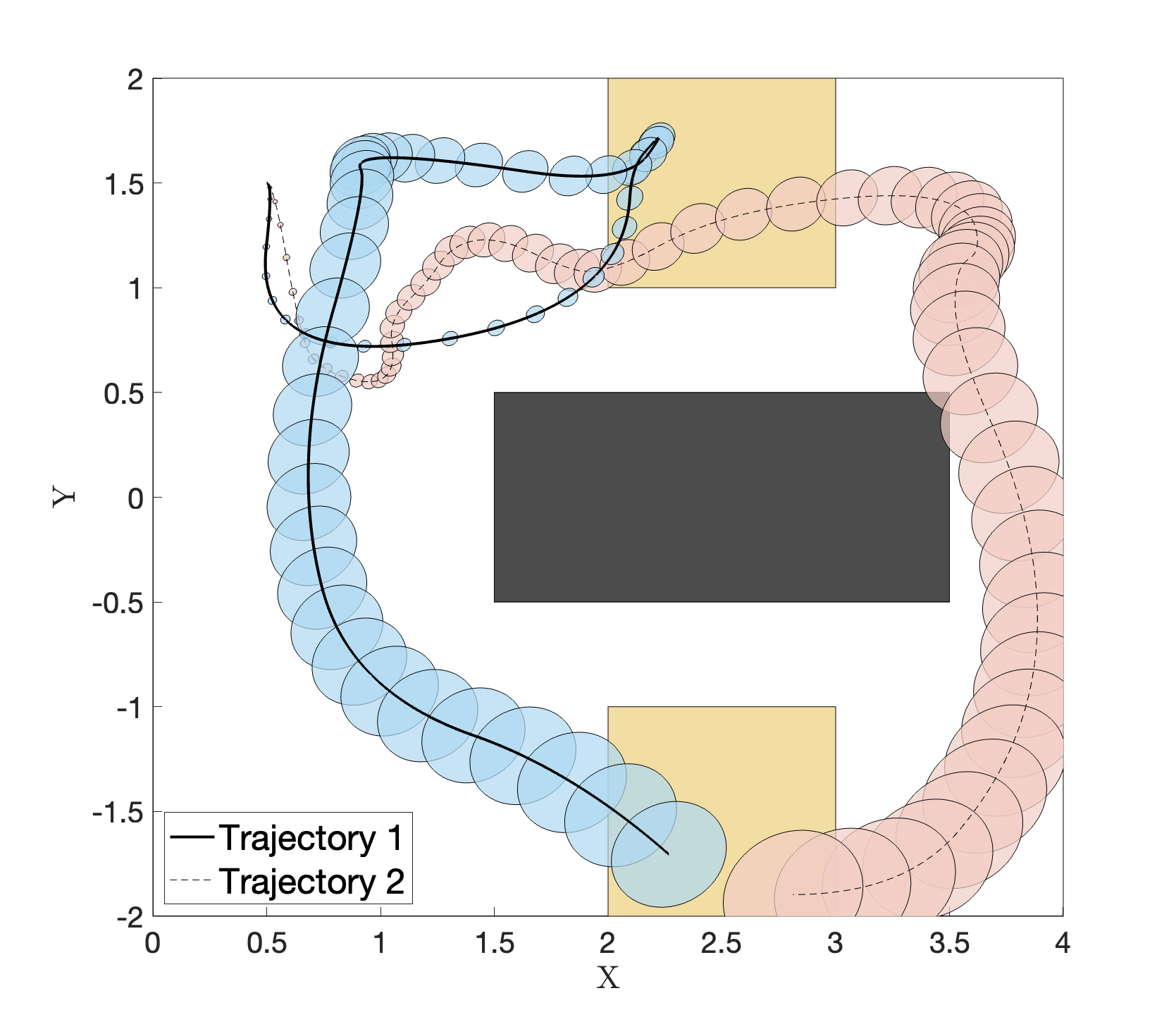}
    \caption{$\phi_2$ Trajectories}
    \label{fig:form2}
\end{subfigure}
\begin{subfigure}{0.235\textwidth}
    \centering
    \includegraphics[width = \linewidth,trim={1cm 0 2cm 0},clip]{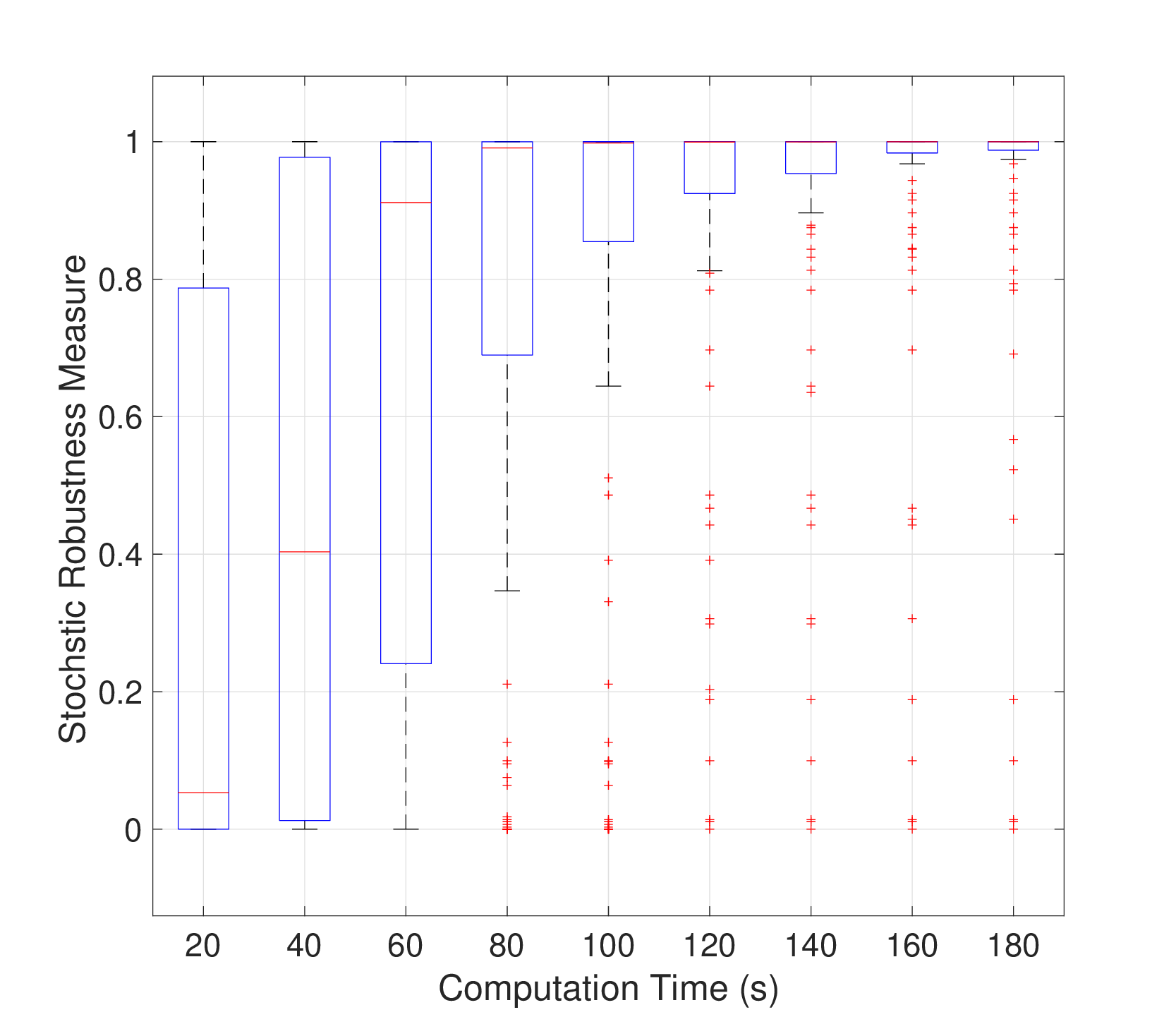}
    \caption{Case Study 3 Benchmark}
    \label{fig:case3}
\end{subfigure}
\caption{Experiment Results. (a) shows two trajectories where $\low{f}(\varphi_2,\btraj_1) = 0.98, \low{f}(\varphi_2,\btraj_2) = 0.67$, and (b) shows benchmarking results for Case Study 3.}
\label{fig:results}
\vspace{-3mm}
\end{figure}

\subsection{Case Study 2 - Computation Time for Different Formulas}
This case study compares computation time for different STL formulas. Table \ref{tab:bench2} shows the average computation time of \stori-RRT for the three formulas and environments over 100 trials with a maximum timeout of 300s. We used \storm bound $\kappa = 0.9$. We see that, even with a tight \storm threshold $\kappa$, the algorithm finds solutions with a good success rate and computation time. This shows that the algorithm is generally applicable to all STL formulas. 

\begin{table}[t]
    \caption{Benchmarking Results for Case Study 2, $\kappa = 0.9$.}
    \label{tab:bench2}
    \centering
    \begin{tabular}{|c|c|c|}
    \hline
        Formula & Computation Time (s) & Success Rate\\ 
        \hline
        $\varphi_1$ & 1.98 $\pm$ 2.56 & 100\%\\
        $\varphi_2$ & 85.95 $\pm$ 82.56 & 90\%\\
        $\varphi_3$ & 44.55 $\pm$ 53.89 & 99\%\\
        \hline
    \end{tabular}
     \vspace{-3mm}
\end{table}

\subsection{Case Study 3 - Asymptotic Optimality}
In this case study, we analyze the relationship between given computation time and the \storm of the resulting trajectory. Here, we study how AO-\stori-RRT performs when planning for formula $\varphi_3$, in the environment in Fig. \ref{fig:form4}. Fig. \ref{fig:case3} presents the results of 100 trials. It clearly shows that the solutions asymptotically approach an optimal \storm. Fig. \ref{fig:form4} gives two sample trajectories with different {\storm}s. Trajectory 2 is found earlier in the optimization, and makes it through the gap. When the planner further optimizes for the \storm, however, it favors trajectories like Trajectory 1, that enter the puddle and dry off before going to the carpet. 

\subsection{Case Study 4 - Simulated Performance}
This case study seeks to analyze how the \storm relates to the \textit{statistical satisfaction rate} (SSR) defined as $\frac{\#\text{satisfying runs}}{\text{total \# runs}}$. We do this by simulating the robot's motion plans from \stori-RRT. Table \ref{tab:corr} compares the \stori of the belief trajectories against the 
SSR
of the simulated realizations. Each motion plan is simulated 1000 times. We use the Breach Toolbox \cite{rosi} to evaluate the realizations' satisfaction. These results show correlation between the \storm  and probability of satisfaction.

\begin{table}[ht]
    \caption{Comparison of StoRI against SSR
    }
    \label{tab:corr}
    \centering
    \begin{tabular}{|c|c|c|c|}
    \hline
        Formula & $\low{f}(\phi,\btraj)$ (\storm) & $\up{f}(\phi,\btraj)$ & Satisfaction Rate\\
        \hline
        $\varphi_1$ & 0.568 & 0.568  & 0.367\\
        $\varphi_1$ & 0.755 & 0.781  & 0.767\\
        $\varphi_1$ & 0.986 & 0.986  & 0.985\\ 
        \hline
        $\varphi_2$ & 0.572 & 0.589  & 0.582\\
        $\varphi_2$ & 0.873 & 0.998  & 0.754\\
        $\varphi_2$ & 0.912 & 0.913  & 0.889\\
        \hline
        $\varphi_3$ & 0.625 & 0.677  & 0.669\\ 
        $\varphi_3$ & 0.876 & 0.878  & 0.872\\
        $\varphi_3$ & 0.989 & 0.994  & 0.985\\
        \hline
    \end{tabular}
     \vspace{-3mm}
\end{table}

\section{Conclusion and Future Work}
This paper proposes a measure, \storm, for quantifying the robustness of stochastic systems' trajectories with respect to STL specifications. We develop a monitor for this measure that reasons about partial trajectories, and use it in a sampling-based motion planner. We show desirable properties of this measure, that the algorithm is probabilistically complete, and that the algorithm asymptotically optimizes for the \storm. Emprical evaluation demonstrates the measure and algorithm's effectiveness and utility. For future work, we plan to investigate   guiding the growth of the motion tree and incorporating measurement uncertainty in the
 planner. 

\bibliographystyle{IEEEtran}
\bibliography{bibliography}

\begin{thebibliography}{10}
\providecommand{\url}[1]{#1}
\csname url@samestyle\endcsname
\providecommand{\newblock}{\relax}
\providecommand{\bibinfo}[2]{#2}
\providecommand{\BIBentrySTDinterwordspacing}{\spaceskip=0pt\relax}
\providecommand{\BIBentryALTinterwordstretchfactor}{4}
\providecommand{\BIBentryALTinterwordspacing}{\spaceskip=\fontdimen2\font plus
\BIBentryALTinterwordstretchfactor\fontdimen3\font minus
  \fontdimen4\font\relax}
\providecommand{\BIBforeignlanguage}[2]{{%
\expandafter\ifx\csname l@#1\endcsname\relax
\typeout{** WARNING: IEEEtran.bst: No hyphenation pattern has been}%
\typeout{** loaded for the language `#1'. Using the pattern for}%
\typeout{** the default language instead.}%
\else
\language=\csname l@#1\endcsname
\fi
#2}}
\providecommand{\BIBdecl}{\relax}
\BIBdecl

\bibitem{Lahijanian:AR-CRAS:2017}
H.~Kress-Gazit, M.~Lahijanian, and V.~Raman, ``Synthesis for robots: Guarantees
  and feedback for robot behavior,'' \emph{Annual Review of Control, Robotics,
  and Autonomous Systems}, vol.~1, pp. 211--236, May 2018.

\bibitem{BaierBook2008}
C.~Baier and J.-P. Katoen, \emph{Principles of Model Checking}.\hskip 1em plus
  0.5em minus 0.4em\relax Cambridge, MA: The MIT Press, 2008.

\bibitem{Maler2004}
O.~Maler and D.~Nickovic, ``Monitoring temporal properties of continuous
  signals,'' in \emph{Formal Techniques, Modelling and Analysis of Timed and
  Fault-Tolerant Systems}, Y.~Lakhnech and S.~Yovine, Eds.\hskip 1em plus 0.5em
  minus 0.4em\relax Berlin, Heidelberg: Springer Berlin Heidelberg, 2004, pp.
  152--166.

\bibitem{donze2010robust}
A.~Donz{\'e} and O.~Maler, ``Robust satisfaction of temporal logic over
  real-valued signals,'' in \emph{International Conference on Formal Modeling
  and Analysis of Timed Systems}.\hskip 1em plus 0.5em minus 0.4em\relax
  Springer, 2010, pp. 92--106.

\bibitem{raman2014}
V.~Raman, A.~Donzé, M.~Maasoumy, R.~M. Murray, A.~Sangiovanni-Vincentelli, and
  S.~A. Seshia, ``Model predictive control with signal temporal logic
  specifications,'' in \emph{53rd IEEE Conference on Decision and Control},
  2014, pp. 81--87.

\bibitem{raman2015reactive}
V.~Raman, A.~Donz{\'e}, D.~Sadigh, R.~M. Murray, and S.~A. Seshia, ``Reactive
  synthesis from signal temporal logic specifications,'' in \emph{Proceedings
  of the 18th international conference on hybrid systems: Computation and
  control}, 2015, pp. 239--248.

\bibitem{OPTSTL1}
V.~Raman, A.~Donz{\'e}, M.~Maasoumy, R.~M. Murray, A.~Sangiovanni-Vincentelli,
  and S.~A. Seshia, ``Model predictive control with signal temporal logic
  specifications,'' in \emph{53rd IEEE Conference on Decision and
  Control}.\hskip 1em plus 0.5em minus 0.4em\relax IEEE, 2014, pp. 81--87.

\bibitem{OPTSTL2}
L.~Lindemann and D.~V. Dimarogonas, ``Robust motion planning employing signal
  temporal logic,'' in \emph{2017 American Control Conference (ACC)}.\hskip 1em
  plus 0.5em minus 0.4em\relax IEEE, 2017, pp. 2950--2955.

\bibitem{CBFSTL}
------, ``Control barrier functions for signal temporal logic tasks,''
  \emph{IEEE control systems letters}, vol.~3, no.~1, pp. 96--101, 2018.

\bibitem{vasile2017sampling}
C.-I. Vasile, V.~Raman, and S.~Karaman, ``Sampling-based synthesis of
  maximally-satisfying controllers for temporal logic specifications,'' in
  \emph{2017 IEEE/RSJ International Conference on Intelligent Robots and
  Systems (IROS)}.\hskip 1em plus 0.5em minus 0.4em\relax IEEE, 2017, pp.
  3840--3847.

\bibitem{CDCPaper}
Q.~H. Ho, R.~B. Ilyes, Z.~N. Sunberg, and M.~Lahijanian, ``Automaton-guided
  control synthesis for signal temporal logic specifications,'' \emph{arXiv
  preprint arXiv:2207.03662}, 2022.

\bibitem{C2TL}
S.~Jha, V.~Raman, D.~Sadigh, and S.~A. Seshia, ``Safe autonomy under perception
  uncertainty using chance-constrained temporal logic,'' \emph{Journal of
  Automated Reasoning}, vol.~60, no.~1, pp. 43--62, 2018.

\bibitem{farahani2018shrinking}
S.~S. Farahani, R.~Majumdar, V.~S. Prabhu, and S.~Soudjani, ``Shrinking horizon
  model predictive control with signal temporal logic constraints under
  stochastic disturbances,'' \emph{IEEE Transactions on Automatic Control},
  vol.~64, no.~8, pp. 3324--3331, 2018.

\bibitem{PrSTLBelta}
C.~Yoo and C.~Belta, ``Control with probabilistic signal temporal logic,''
  \emph{arXiv preprint arXiv:1510.08474}, 2015.

\bibitem{PrSTLSadigh}
D.~Sadigh and A.~Kapoor, ``Safe control under uncertainty with probabilistic
  signal temporal logic,'' in \emph{Proceedings of Robotics: Science and
  Systems XII}, 2016.

\bibitem{lee2021signal}
K.~M.~B. Lee, C.~Yoo, and R.~Fitch, ``Signal temporal logic synthesis as
  probabilistic inference,'' in \emph{2021 IEEE International Conference on
  Robotics and Automation (ICRA)}.\hskip 1em plus 0.5em minus 0.4em\relax IEEE,
  2021, pp. 5483--5489.

\bibitem{tiger2020incremental}
M.~Tiger and F.~Heintz, ``Incremental reasoning in probabilistic signal
  temporal logic,'' \emph{International Journal of Approximate Reasoning}, vol.
  119, pp. 325--352, 2020.

\bibitem{bartocci2015system}
E.~Bartocci, L.~Bortolussi, L.~Nenzi, and G.~Sanguinetti, ``System design of
  stochastic models using robustness of temporal properties,''
  \emph{Theoretical Computer Science}, vol. 587, pp. 3--25, 2015.

\bibitem{LinSDE}
S.~S{\"a}rkk{\"a} and A.~Solin, ``Applied stochastic differential
  equations.''\hskip 1em plus 0.5em minus 0.4em\relax Cambridge University
  Press, 2019, vol.~10, ch.~3.

\bibitem{blackmore2006probabilistic}
L.~Blackmore, H.~Li, and B.~Williams, ``A probabilistic approach to optimal
  robust path planning with obstacles,'' in \emph{2006 American Control
  Conference}.\hskip 1em plus 0.5em minus 0.4em\relax IEEE, 2006, pp. 7--pp.

\bibitem{luders2010chance}
B.~Luders, M.~Kothari, and J.~How, ``Chance constrained rrt for probabilistic
  robustness to environmental uncertainty,'' in \emph{AIAA guidance,
  navigation, and control conference}, 2010, p. 8160.

\bibitem{ho2022gaussian}
Q.~H. Ho, Z.~N. Sunberg, and M.~Lahijanian, ``Gaussian belief trees for chance
  constrained asymptotically optimal motion planning,'' \emph{arXiv preprint
  arXiv:2202.12407}, 2022.

\bibitem{pairet2021}
E.~Pairet, J.~D. Hernandez, M.~Carreras, Y.~Petillot, and M.~Lahijanian,
  ``Online mapping and motion planning under uncertainty for safe navigation in
  unknown environments,'' \emph{IEEE Transactions on Automation Science and
  Engineering}, pp. 1--23, 2021.

\bibitem{ProbBook}
A.~N. Kolmogorov and A.~T. Bharucha-Reid, \emph{Foundations of the theory of
  probability: Second English Edition}.\hskip 1em plus 0.5em minus 0.4em\relax
  Courier Dover Publications, 2018.

\bibitem{frechet1935generalisation}
M.~Fr{\'e}chet, ``G{\'e}n{\'e}ralisation du th{\'e}oreme des probabilit{\'e}s
  totales,'' \emph{Fundamenta mathematicae}, vol.~1, no.~25, pp. 379--387,
  1935.

\bibitem{rosi}
J.~V. Deshmukh, A.~Donz{\'e}, S.~Ghosh, X.~Jin, G.~Juniwal, and S.~A. Seshia,
  ``Robust online monitoring of signal temporal logic,'' \emph{Formal Methods
  in System Design}, vol.~51, no.~1, pp. 5--30, 2017.

\bibitem{ao-rrt}
K.~Hauser and Y.~Zhou, ``Asymptotically optimal planning by feasible
  kinodynamic planning in a state--cost space,'' \emph{IEEE Transactions on
  Robotics}, vol.~32, no.~6, pp. 1431--1443, 2016.

\bibitem{RRT}
S.~M. LaValle \emph{et~al.}, ``Rapidly-exploring random trees: A new tool for
  path planning,'' 1998.

\bibitem{sucan2012the-open-motion-planning-library}
I.~A. {\c{S}}ucan, M.~Moll, and L.~E. Kavraki, ``The {O}pen {M}otion {P}lanning
  {L}ibrary,'' \emph{{IEEE} Robotics \& Automation Magazine}, vol.~19, no.~4,
  pp. 72--82, December 2012, \url{https://ompl.kavrakilab.org}.

\bibitem{storirepo}
\BIBentryALTinterwordspacing
R.~Ilyes, ``{StoRI-RRT Motion Planner},'' Dec. 2022. [Online]. Available:
  \url{https://github.com/aria-systems-group/StoRI}
\BIBentrySTDinterwordspacing

\bibitem{unicycle}
A.~De~Luca, G.~Oriolo, and M.~Vendittelli, ``Stabilization of the unicycle via
  dynamic feedback linearization,'' \emph{IFAC Proceedings Volumes}, vol.~33,
  no.~27, pp. 687--692, 2000.

\bibitem{simon2006optimal}
D.~Simon, \emph{Optimal state estimation: Kalman, H infinity, and nonlinear
  approaches}.\hskip 1em plus 0.5em minus 0.4em\relax John Wiley \& Sons, 2006.

\end{thebibliography}

\clearpage
\appendix
\label{sec:appendix}
\begin{proof}[Proof of Lemma \ref{lemma:monitorupper}]
    The proof is as follows: Consider the value of the \stori Monitor upper bound of two prefixes $\btraj_{t_1}$ and $\btraj_{t_2}$, where $t_2 \geq t_1$. By definition \ref{def:Monitor}, the value of the upper bound of the \stori Monitor is one of $4$ terms. We show that, for each case, the value of the upper bound of the \stori Monitor of $\btraj_{t_2}$ is always less than or equal to the value of the upper bound of the \stori Monitor of $\btraj_{t_1}$. It follows then, that the upper bound of the \stori Monitor monotonically decreases with respect to $t$. \\
     Without loss of generality, consider $\psi = \phi_1 U_{[a,b]} \phi_2$. From Definition \ref{def:Monitor}, the upper bound of the monitor of $\btraj_{t_1}$ with respect to $\psi$ when $t_1 \in [0,b)$ is:
    \begin{align*}
        &\up{\Tilde{f}}(\phi_1 U_{\langle a,b \rangle} \phi_2, \btraj_{t_1}) = \\
        &= \max \Big\{\up{f}(\phi_1 U_{\langle a,b \rangle} \phi_2, \btraj_{t_1}), \min_{t'\in[0,b]}\up{f}(\phi_1,\btraj_{t_1}^{t'})\Big\} \\
        &= \max \bigg\{\max_{t\in\langle a,b \rangle} \Big\{ \min \big\{\up{f}(\phi_2,\btraj_{t_1}^t), \min_{t'\in[0,t]}\up{f}(\phi_1,\btraj_{t_1}^{t'})\big\}\Big\}, \\
        &\hspace{5.5cm}\min_{t'\in[0,b]}\up{f}(\phi_1,\btraj_{t_1}^{t'})\bigg\} \\
        &= \max \bigg\{\max_{t\in\langle a,t_1 \rangle} \Big\{ \min \big\{\up{f}(\phi_2,\btraj_{t_1}^t), \min_{t'\in[0,t]}\up{f}(\phi_1,\btraj_{t_1}^{t'})\big\}\Big\}, \\
        &\hspace{5.5cm}\min_{t'\in[0,t_1]}\up{f}(\phi_1,\btraj_{t_1}^{t'})\bigg\}
    \end{align*}
    We now examine the case where the second term is greater than the first, the case where the first term is greater than the second, the case where there is not yet any information relevant to this formula, and the case where the time interval is closed ($t_1\geq b$).

    \textit{Case 1:} We first examine the case where the second term is greater, i.e.,
    \begin{align}
        &\min_{t'\in[0,t_1]}\up{f}(\phi_1,\btraj_{t_1}^{t'}) \geq \max_{t\in\langle a,t_1 \rangle} \Big\{ \min \big\{\up{f}(\phi_2,\btraj_{t_1}^t), \notag\\
        & \hspace{5.5cm}\min_{t'\in[0,t]}\up{f}(\phi_1,\btraj_{t_1}^{t'})\big\}\Big\} \notag\\
        &\Longrightarrow \up{\Tilde{f}}(\phi_1 U_{\langle a,b \rangle} \phi_2, \btraj_{t_1}) = \min_{t'\in[0,t_1]}\up{f}(\phi_1,\btraj_{t_1}^{t'}) \label{eq:c11}
    \end{align}
    The upper bound of the monitor of $\btraj_{t_2}$ with respect to $\psi$ is:
    \begin{align}
        &\up{\Tilde{f}}(\phi_1 U_{\langle a,b \rangle} \phi_2, \btraj_{t_2}) = \notag\\
        &= \max \bigg\{\max_{t\in\langle a,t_2 \rangle} \Big\{ \min \big\{\up{f}(\phi_2,\btraj_{t_2}^t), \min_{t'\in[0,t]}\up{f}(\phi_1,\btraj_{t_2}^{t'})\big\}\Big\}, \notag\\
        &\hspace{5cm}\min_{t'\in[0,t_2]}\up{f}(\phi_1,\btraj_{t_2}^{t'})\bigg\} \label{eq:c12}
    \end{align}
    We now examine how both of these terms relate to equation \ref{eq:c11}. 
    
    \textit{Case 1.1:}
    We first look at how the first term in equation \ref{eq:c12} relates to the bound in \ref{eq:c11}. This term can be split as follows:
    \begin{align*}
        & \max_{t\in\langle a,t_2 \rangle} \Big\{ \min \big\{\up{f}(\phi_2,\btraj_{t_2}^t), \min_{t'\in[0,t]}\up{f}(\phi_1,\btraj_{t_2}^{t'})\big\}\Big\} \\
        & = \max \bigg\{ \max_{t\in\langle a,t_1 \rangle} \Big\{ \min \big\{\up{f}(\phi_2,\btraj_{t_2}^t), \min_{t'\in[0,t]}\up{f}(\phi_1,\btraj_{t_2}^{t'})\big\}\Big\}, \\
        &\hspace{1.4cm}\max_{t\in\langle t_1,t_2 \rangle} \Big\{ \min \big\{\up{f}(\phi_2,\btraj_{t_2}^t), \min_{t'\in[0,t]}\up{f}(\phi_1,\btraj_{t_2}^{t'})\big\}\Big\}\bigg\}\\
        &= \max\{S_{1},S_{2}\}
    \end{align*}
    Recall, by our assumptions for case 1, term $S_{1}$ is less than or equal to the value in equation \ref{eq:c11}, i.e.,
    \begin{equation}
        S_{1} \leq \min_{t'\in[0,t_1]}\up{f}(\phi_1,\btraj_{t_1}^{t'})
        \label{eq:c121}
    \end{equation}
    The term $S_{2}$ is upper bounded by:
    \begin{align}
        &\max_{t\in\langle t_1,t_2 \rangle} \Big\{ \min \big\{\up{f}(\phi_2,\btraj_{t_2}^t), \min_{t'\in[0,t]}\up{f}(\phi_1,\btraj_{t_2}^{t'})\big\}\Big\}\bigg\}\notag\\
        &\leq \max_{t\in\langle t_1,t_2 \rangle} \Big\{ \min \big\{1, \min_{t'\in[0,t]}\up{f}(\phi_1,\btraj_{t_2}^{t'})\big\}\Big\}\bigg\}\notag\\
        &= \max_{t\in\langle t_1,t_2 \rangle} \Big\{\min_{t'\in[0,t]}\up{f}(\phi_1,\btraj_{t_2}^{t'})\Big\}\bigg\}\notag\\
        &\leq \min_{t'\in[0,t_1]}\up{f}(\phi_1,\btraj_{t_1}^{t'})\notag\\
        &\Longrightarrow S_{2} \leq \min_{t'\in[0,t_1]}\up{f}(\phi_1,\btraj_{t_1}^{t'}) \label{eq:c122}
    \end{align}
    From equations \ref{eq:c121} and \ref{eq:c122}, we conclude that:
    \begin{align}
        &\max_{t\in\langle a,t_2 \rangle} \Big\{ \min \big\{\up{f}(\phi_2,\btraj_{t_2}^t), \min_{t'\in[0,t]}\up{f}(\phi_1,\btraj_{t_2}^{t'})\big\}\Big\} \notag\\
        &\hspace{5cm}\leq \min_{t'\in[0,t_1]}\up{f}(\phi_1,\btraj_{t_1}^{t'}) \label{eq:c13}
    \end{align}
    
    \textit{Case 1.2:}
    We now look at how the second term in equation \ref{eq:c12} relates to the bound in \ref{eq:c11}. This term can be split as follows:
    \begin{align*}
        &\min_{t'\in[0,t_2]}\up{f}(\phi_1,\btraj_{t_2}^{t'})\\
        &= \min\big\{\min_{t'\in[0,t_1]}\up{f}(\phi_1,\btraj_{t_2}^{t'}),\min_{t'\in[t_1,t_2]}\up{f}(\phi_1,\btraj_{t_2}^{t'})\big\}
    \end{align*}
    And this value is always less than the bound in equation \ref{eq:c11}, i.e.,
    \begin{equation}
        \min_{t'\in[0,t_2]}\up{f}(\phi_1,\btraj_{t_2}^{t'}) \leq \min_{t'\in[0,t_1]}\up{f}(\phi_1,\btraj_{t_2}^{t'})
        \label{eq:c14}
    \end{equation}
    Equations \ref{eq:c13} and \ref{eq:c14} show that, for case 1, the \stori Monitor of $\btraj_{t_2}$ with respect to $\psi$ is less than or equal to the \stori Monitor of $\btraj_{t_1}$:
    \begin{equation}
        \boxed{
        \up{\Tilde{f}}(\phi_1 U_{\langle a,b \rangle} \phi_2, \btraj_{t_2}) \leq \up{\Tilde{f}}(\phi_1 U_{\langle a,b \rangle} \phi_2, \btraj_{t_1}) \text{, case 1}
        }
        \label{eq:case1}
    \end{equation}
    
    \textit{Case 2:}
    We next examine the case where the first term of the \stori Monitor of $\btraj_{t_1}$ with respect to $\psi$ is greater, i.e.,
    \begin{align}
        &\max_{t\in\langle a,t_1 \rangle} \Big\{ \min \big\{\up{f}(\phi_2,\btraj_{t_1}^t), \min_{t'\in[0,t]}\up{f}(\phi_1,\btraj_{t_1}^{t'})\big\}\Big\} \notag\\
        &\geq \min_{t'\in[0,t_1]}\up{f}(\phi_1,\btraj_{t_1}^{t'}) \notag\\
        &\Longrightarrow \up{\Tilde{f}}(\phi_1 U_{\langle a,b \rangle} \phi_2, \btraj_{t_1}) = \notag\\
        &\max_{t\in\langle a,t_1 \rangle} \Big\{ \min \big\{\up{f}(\phi_2,\btraj_{t_1}^t), \min_{t'\in[0,t]}\up{f}(\phi_1,\btraj_{t_1}^{t'})\big\}\Big\} \label{eq:c21}
    \end{align}
    We now examine how this value relates to both terms in the upper bound of the monitor of $\btraj_{t_2}$ with respect to $\psi$ (equation \ref{eq:c12}). 
    
    \textit{Case 2.1:}
    We first look at how the first term in equation \ref{eq:c12} relates to the bound in \ref{eq:c21}. This term is again split into $S_1$ and $S_2$. For case 2, we see that $S_1$ is the same value as equation $\ref{eq:c21}$:
    \begin{equation}
        S_1 = \max_{t\in\langle a,t_1 \rangle} \Big\{ \min \big\{\up{f}(\phi_2,\btraj_{t_1}^t), \min_{t'\in[0,t]}\up{f}(\phi_1,\btraj_{t_1}^{t'})\big\}\Big\}
        \label{eq:c211}
    \end{equation}
    Futhermore, equation \ref{eq:c122} still holds. We know by our assumptions for case 2 that this value is less than or equal to the value in equation \ref{eq:c21}, i.e.,
    \begin{align}
        &S_{2} \leq \min_{t'\in[0,t_1]}\up{f}(\phi_1,\btraj_{t_1}^{t'}) \notag\\
        &\leq \max_{t\in\langle a,t_1 \rangle} \Big\{ \min \big\{\up{f}(\phi_2,\btraj_{t_1}^t), \min_{t'\in[0,t]}\up{f}(\phi_1,\btraj_{t_1}^{t'})\big\}\Big\} \notag\\
        &\Longrightarrow S_2 \leq \max_{t\in\langle a,t_1 \rangle} \Big\{ \min \big\{\up{f}(\phi_2,\btraj_{t_1}^t), \min_{t'\in[0,t]}\up{f}(\phi_1,\btraj_{t_1}^{t'})\big\}\Big\} \label{eq:c222}
    \end{align}
    From equations \ref{eq:c211} and \ref{eq:c222}, we conclude that:
    \begin{align}
        &\max_{t\in\langle a,t_2 \rangle} \Big\{ \min \big\{\up{f}(\phi_2,\btraj_{t_2}^t), \min_{t'\in[0,t]}\up{f}(\phi_1,\btraj_{t_2}^{t'})\big\}\Big\} \notag\\
        &\leq \max_{t\in\langle a,t_1 \rangle} \Big\{ \min \big\{\up{f}(\phi_2,\btraj_{t_2}^t), \min_{t'\in[0,t]}\up{f}(\phi_1,\btraj_{t_2}^{t'})\big\}\Big\} \label{eq:c23}
    \end{align}
    
    \textit{Case 2.2:}
    We now look at how the second term in equation \ref{eq:c12} relates to the bound in equation \ref{eq:c21}. Equation \ref{eq:c14} still holds. Furthermore, We know by our assumptions that this value is less than or equal to the value in equation \ref{eq:c21}, i.e.,
    \begin{align}
        &\min_{t'\in[0,t_2]}\up{f}(\phi_1,\btraj_{t_2}^{t'}) \leq \min_{t'\in[0,t_1]}\up{f}(\phi_1,\btraj_{t_2}^{t'}) \\
        &\leq \max_{t\in\langle a,t_1 \rangle} \Big\{ \min \big\{\up{f}(\phi_2,\btraj_{t_2}^t), \min_{t'\in[0,t]}\up{f}(\phi_1,\btraj_{t_2}^{t'})\big\}\Big\} \label{eq:c24}
    \end{align}
    Equations \ref{eq:c23} and \ref{eq:c24} show that, for case 2, the \stori Monitor of $\btraj_{t_2}$ is less than or equal to the \stori Monitor of $\btraj_{t_1}$:
    \begin{equation}
        \boxed{
        \up{\Tilde{f}}(\phi_1 U_{\langle a,b \rangle} \phi_2, \btraj_{t_2}) \leq \up{\Tilde{f}}(\phi_1 U_{\langle a,b \rangle} \phi_2, \btraj_{t_1}) \text{, case 2}
        }
        \label{eq:case2}
    \end{equation}
    
    \textit{Case 3:}
    The third case is when $t < 0$. This could happen when the $U_I$ operator is nested within another temporal operator, resulting in the time-shifted trajectory having a negative time. It means that there are no points in the trajectory relevant to the operator yet. Definition \ref{def:Monitor} defines the upper bound as 1 for this case. The Monitor itself is defined as having an upper bound less than or equal to 1; the value of the monitor cannot increase past 1. Therefore, for case 3, the \stori Monitor of $\btraj_{t_2}$ is less than or equal to the \stori Monitor of $\btraj_{t_1}$:
    \begin{equation}
        \boxed{
        \up{\Tilde{f}}(\phi_1 U_{\langle a,b \rangle} \phi_2, \btraj_{t_2}) \leq \up{\Tilde{f}}(\phi_1 U_{\langle a,b \rangle} \phi_2, \btraj_{t_1}) \text{, case 3}
        }
        \label{eq:case3}
    \end{equation}

    \textit{Case 4:}
    The fourth case is when $t_1 \geq b$. Beyond this point, the \stori Monitor will not change, and has converged to the value of the \stori:
    \begin{align*}
        &\up{\Tilde{f}}(\phi_1 U_{\langle a,b \rangle} \phi_2, \btraj_{t_1}) = \up{f}(\phi_1 U_{\langle a,b \rangle} \phi_2, \btraj_{t_1}) \\
        &= \max_{t\in\langle a,b \rangle} \Big\{ \min \big\{\up{f}(\phi_2,\btraj_{t_1}^t), \min_{t'\in[0,t]}\up{f}(\phi_1,\btraj_{t_1}^{t'})\big\}\Big\}, \\
    \end{align*}

    Because $t_2 \geq t_1$, the time $t\in\langle a,b \rangle$ that maximizes the value above will be the same value for both trajectories:
    \begin{equation}
        \boxed{
        \up{\Tilde{f}}(\phi_1 U_{\langle a,b \rangle} \phi_2, \btraj_{t_2}) = \up{\Tilde{f}}(\phi_1 U_{\langle a,b \rangle} \phi_2, \btraj_{t_1}) \text{, case 4}
        }
        \label{eq:case4}
    \end{equation}
    
    By  \eqref{eq:case1}, \eqref{eq:case2}, \eqref{eq:case3}, and \eqref{eq:case4}, the \stori Monitor of $\btraj_{t_2}$ is always less than or equal to the \stori Monitor of $\btraj_{t_1}$. Therefore, the upper bound of the monitor monotonically decreases with respect to $t$.
\end{proof}

\end{document}